\theoremstyle{plain}
\newtheorem{theorem}{Theorem}
\newtheorem{lemma}{Lemma}
\theoremstyle{definition}
\newtheorem{dfn}{Definition}
\newtheorem{xmp}{Example}
\theoremstyle{remark}
\newtheorem*{remark}{Remark}
\DeclareMathOperator{\const}{const}
\DeclareMathOperator{\rank}{rank}
\DeclareMathOperator{\sn}{sn}
\DeclareMathOperator{\cn}{cn}
\DeclareMathOperator{\dn}{dn}
\DeclareMathOperator{\diag}{diag}
\author{Igor G. Korepanov}
\title{Two-cocycles give a full nonlinear parameterization of the simplest 3--3 relation}
\date{October 2013 --- May 2014}
\begin{document}

\maketitle

\begin{abstract}
A parameterization of Grassmann-algebraic relations corresponding to the Pachner move 3--3 is proposed. In these relations, each 4-simplex is assigned a Grassmann weight depending on five anticommuting variables associated with its 3-faces. The weights are chosen to have the ``simplest'' form --- a Grassmann--Gaussian exponent or its analogue (satisfying a similar system of differential equations). Our parameterization works for a Zariski open set of such relations, looks relevant from the algebraic-topological viewpoint, and reveals intriguing \emph{nonlinear} relations between objects associated with simplices of different dimensions.
\end{abstract}

\textit{Key words:} four-dimensional Pachner moves; Grassmann algebras; Clifford algebras; maximal isotropic Euclidean subspaces

\medskip

\textit{2010 Mathematics Subject Classification:} 15A75; 57Q99; 57R56

\bigskip

\section{Introduction}\label{s:i}

\subsubsection*{Pachner moves --- elementary local rebuildings of a manifold triangulation}

In order to construct a topological field theory for piecewise linear (PL) manifolds, it makes sense first to construct algebraic relations corresponding to \emph{Pachner moves}, as is explained, for instance, in~\cite[Section~1]{Lickorish}. Due to the Pachner's theorem stating that a triangulation of a PL manifold can be transformed into any other triangulation using a finite sequence of these moves~\cite{Pachner}, there is then a hope that some quantities can be derived from such algebra characterizing the whole manifold.

Here we will be dealing with the four-dimensional case, so we recall what four-dimensional Pachner moves are. Each of them replaces a cluster of 4-simplices, which we call the \emph{left-hand side} (l.h.s.) of the move, with a cluster of some other 4-simplices --- its \emph{right-hand side} (r.h.s.), occupying the same place in the triangulation and having the same boundary. Gluing the withdrawn and the replacing clusters together (using the identity mapping of the boundary, and forgetting for a moment about the rest of the manifold), one must get a sphere~$S^4$ triangulated in five 4-simplices as the boundary~$\partial \Delta^5$ of a 5-simplex~$\Delta^5$. A pedagogical introduction to Pachner moves can be found in~\cite{Lickorish}; in the present paper, we will be dealing only with the move 3--3 (replacing three 4-simplices with three other ones) which is, in some informal sense, central: experience shows that if we have managed to find an algebraic formula whose structure can be regarded as reflecting the structure of this move, then we can also find (usually more complicated) formulas corresponding to the rest of Pachner moves.

\subsubsection*{Grassmann--Berezin calculus of anticommuting variables}

The simplest nontrivial relations corresponding to Pachner moves seem to arise in \emph{Grassmann algebras}. The Grassmann--Berezin calculus~\cite{B,B-super} of anticommuting variables appeared naturally in the author's work~\cite{A-style} as a means for elegant formulation of what happens with some invariants related to ``exotic Reidemeister torsions'' when gluing two manifolds together; the paper~\cite{A-style} may be compared with the preceding paper~\cite{torsions} where Grassmann--Berezin calculus was not yet used.

In this paper, we continue the work begun in~\cite{2-cocycles,KS2}. We take the simplest possible form~\eqref{33} of Grassmann-algebraic relation corresponding to Pachner move 3--3, where just one Grassmann variable is attached to a 3-face. We further assume that the Grassmann weight of a 4-simplex has a special form, depending on the five variables on its 3-faces and obeying a certain system of differential equations. One typical case of such weight is a Grassmann--Gaussian exponent~\eqref{xFx}.

How ``local'' relations in Grassmann algebra corresponding, in some informal sense, to Pachner moves can lead to ``global'' manifold invariants, has been shown already in paper~\cite{A-style}. We would like also to refer to the experience gained in the area of Yang--Baxter equation and Zamolodchikov tetrahedron equation in mathematical physics, where every object possessing a pictorial representation like those typical for the mentioned equations, turned out eventually to be relevant for the theory.

\subsubsection*{The results of this paper}

We give a full parameterization for our 3--3 relations, that is, embracing a Zariski open subset of their 18-parametric family. The key component in our parameterization is an arbitrary 2-cocycle (over the field~$\mathbb C$) on~$\partial\Delta^5$; there is also a ``gauge freedom'' \eqref{renorm},~\eqref{interc}. Our parameterization looks relevant from the viewpoint of extending it to other Pachner moves, and to entire manifolds, and this is what was lacking in our previous parameterization already proposed in~\cite[Section~5]{KS2}. 

One more result is some very simple formulas (namely, \eqref{1:c} and~\eqref{e}) using elliptic functions and pertaining to a \emph{single} 4-simplex. These intriguing formulas give rise to the idea that there may also be interesting objects of \emph{algebraic-geometrical} nature pertaining to the whole 3--3 move, and awaiting their discovery.

\subsubsection*{Edge operators and a nonlinear algebraic topology}

Interesting nonlinear algebraic relations appear already within one 4-simplex equipped with a Grassmann--Gaussian exponent~\eqref{xFx}. The key idea is to introduce \emph{edge operators} --- first order differential operators annihilating this exponent (or its analogue) and involving only Grassmann variables in tetrahedra containing a given edge. Remarkable properties of edge operators, found with the aid of computer algebra, lead to the existence of a 2-cocycle parameterizing the Grassmann--Gaussian exponents (or their analogues) to within the gauge freedom mentioned above. And, moreover, edge operators compose in a nice way when 4-simplices are glued together, if a 2-cocycle is given on their union. This is what the present work is based upon.

\subsubsection*{Organization of the paper}

Below,
\begin{itemize}\itemsep 0pt
 \item in Section~\ref{s:GBC}, we introduce our tools: Grassmann--Berezin calculus of anticommuting variables and complex Euclidean spaces, and remind some basic facts related to them,
 \item in Section~\ref{s:single}, we introduce a \emph{quasi-Gaussian} Grassmann weight for a 4-simplex and show how it brings about a 2-cocycle --- an unexpected result of direct calculations,
 \item in Section~\ref{s:pw}, we study further the relations between a quasi-Gaussian weight and its cocycle. It turns out that 2-cocycles parameterize quasi-Gaussian weights to within some natural gauge transformations,
 \item and in Section~\ref{s:33}, we demonstrate how 2-cocycles given on the boundary of a 5-simplex give rise to (a Zariski open subset of all) relations with quasi-Gaussian weights corresponding to a Pachner move 3--3.
\end{itemize}

\section{Some basic facts related to Grassmann algebras}\label{s:GBC}

\subsection{Grassmann algebras and Berezin integral}\label{ss:GB}

\begin{dfn}\label{dfn:GA}
In this paper, a \emph{Grassmann algebra} is an associative algebra over the field~$\mathbb C$ of complex numbers, with unity, generators~$x_i$ --- also called Grassmann variables --- and relations
\begin{equation*}
x_i x_j = -x_j x_i .
\end{equation*}
\end{dfn}

This implies that, in particular, $x_i^2 =0$, so each element of a Grassmann algebra is a polynomial of degree $\le 1$ in each~$x_i$.

\begin{remark}
Grassmann algebra is also known as the \emph{exterior algebra} of the vector space generated by~$x_i$.
\end{remark}

The \emph{degree} of a Grassmann monomial is its total degree in all Grassmann variables. If an algebra element consists of monomials of only odd or only even degrees, it is called \emph{odd} or, respectively, \emph{even}. If all the monomials have degree~2, we call such element a Grassmannian \emph{quadratic form}.

The \emph{exponent} is defined by its usual Taylor series.

\begin{dfn}\label{dfn:GGe}
A \emph{Grassmann--Gaussian exponent} is the exponent of a quadratic form.
\end{dfn}

\begin{xmp}
For $\lambda,\mu,\nu\in\mathbb C$, the following is an example of Grassmann--Gaussian exponent:
\begin{equation*}
\exp (\lambda x_1x_2+\mu x_2x_3+\nu x_3x_4) = 1+\lambda x_1x_2+\mu x_2x_3+\nu x_3x_4+\lambda \nu x_1x_2x_3x_4 .
\end{equation*}\end{xmp}

The concept of \emph{derivative} is extended onto Grassmann algebras, due to their noncommutativity, in two ways.

\begin{dfn}\label{dfn:der}
The \emph{left derivative} and \emph{right derivative} with respect to a Grassmann variable~$x_i$ are $\mathbb C$-linear operations in Grassmann algebra, denoted $\dfrac{\partial}{\partial x_i}$ and~$\dfrac{\overleftarrow{\partial}}{\partial x_i}$ and defined as follows. Let $f$ be an element not containing variable~$x_i$, then
\begin{equation}\label{Gd1}
\dfrac{\partial}{\partial x_i}f=f\dfrac{\overleftarrow{\partial}}{\partial x_i}=0,
\end{equation}
and
\begin{equation}\label{Gd2}
\dfrac{\partial}{\partial x_i}(x_if)=f,\qquad (fx_i)\dfrac{\overleftarrow{\partial}}{\partial x_i}=f.
\end{equation}
\end{dfn}

\emph{Leibniz rules} for differentiating a product follow directly from \eqref{Gd1} and~\eqref{Gd2} and can be formulated as follows: for $f$ either even or odd,
\begin{equation}\label{L}
\dfrac{\partial}{\partial x_i}(fg) = \dfrac{\partial}{\partial x_i}f\cdot g + \epsilon f\dfrac{\partial}{\partial x_i}g,\qquad (gf)\dfrac{\overleftarrow{\partial}}{\partial x_i}=g\cdot f\dfrac{\overleftarrow{\partial}}{\partial x_i} + \epsilon g\dfrac{\overleftarrow{\partial}}{\partial x_i}f,
\end{equation}
where $\epsilon=1$ for an even~$f$ and $\epsilon=-1$ for an odd~$f$.

A peculiarity of the \emph{Grassmann--Berezin calculus} of anticommuting variables is that the most natural idea of how to define an integral in a Grassmann algebra leads to the same operation as the right derivative. Still, this operation deserves its second name --- Berezin integral, as well as a separate definition, in such contexts where it appears as an analogue of integration, and not differentiation, in the usual calculus.

\begin{dfn}\label{dfn:Bi}
\emph{Berezin integral} in a variable~$x_i$ is a $\mathbb C$-linear operator
\[
f\mapsto \int f\, \mathrm dx_i
\]
in a Grassmann algebra, satisfying
\begin{equation*}
\int \mathrm dx_i =0, \qquad \int x_i\, \mathrm dx_i =1, \qquad \int gh\, \mathrm dx_i = g \int h\, \mathrm dx_i,
\end{equation*}
where $g$ does not contain~$x_i$.

\emph{Multiple integral} is defined according to the following Fubini rule:
\[
\idotsint f\, \mathrm dx_1 \, \mathrm dx_2 \,\dots \, \mathrm dx_n = \int \left( \dots \int \left( \int f \, \mathrm dx_1 \right) \mathrm dx_2 \dots \right) \mathrm dx_n \, .
\]
\end{dfn}

\subsection{Clifford algebra generated by differentiations w.r.t.\ and multiplications by Grassmann variables}\label{ss:C}

Consider a $\mathbb C$-linear combination of operators of left differentiations and left multiplications by Grassmann variables:
\begin{equation}\label{bg}
d = \sum_{t=1}^n (\beta_t\partial_t + \gamma_tx_t),
\end{equation}
where we use a shorthand notation $\partial_t\stackrel{\rm def}{=}\partial / \partial x_t$.

We regard the anticommutator of two operators~\eqref{bg} (defined as $[A,B]_+=AB+BA$ for operators $A$ and~$B$) as their \emph{scalar product}:
\begin{equation}\label{sc}
\langle d^{(1)},d^{(2)}\rangle \stackrel{\rm def}{=} [d^{(1)},d^{(2)}]_+ = \sum_{t=1}^n (\beta_t^{(1)}\gamma_t^{(2)} + \beta_t^{(2)}\gamma_t^{(1)} ).
\end{equation}
With this scalar product, operators~\eqref{bg} form a \emph{complex Euclidean space}, we denote it~$\mathcal V$, while all polynomials of these operators form a \emph{Clifford algebra}.

\begin{remark}
Much interesting material about Clifford algebras (and maximal iso\-tropic spaces in complex Euclidean spaces that appear in our Subsection~\ref{ss:max-iso}) can be found in the book~\cite{C}.
\end{remark}

Separate summands in~\eqref{sc} will be also of use for us, so we introduce the notation
\begin{equation}\label{sct}
\langle d^{(1)},d^{(2)}\rangle_t \stackrel{\rm def}{=} \beta_t^{(1)}\gamma_t^{(2)} + \beta_t^{(2)}\gamma_t^{(1)} .
\end{equation}
Also, we denote~$\mathcal V_t$ the two-dimensional linear space spanned by $\partial_t$ and~$x_t$; it is clear that
\begin{equation*}
\mathcal V = \bigoplus_{t=1}^n \mathcal V_t
\end{equation*}
in the sense of complex Euclidean spaces.

\subsection{Maximal isotropic spaces of operators}\label{ss:max-iso}

\begin{dfn}\label{dfn:iso}
Subspace~$V$ of a complex Euclidean space is called \emph{isotropic} if the scalar product restricted onto~$V$ identically vanishes.
\end{dfn}

Especially interesting for us are maximal --- in the sense of inclusion --- isotropic subspaces.

\begin{xmp}\label{xmp:dx}
Subspace of the space~$\mathcal V$ of operators~\eqref{bg} spanned either by all~$\partial_t$ or all~$x_t$ is maximal isotropic.
\end{xmp}

\begin{theorem}\label{th:eo}
Let $x_t$, $t=1,\ldots,n$ be generators of a Grassmann algebra over~$\mathbb C$, and $\mathcal V$ the $2n$-dimensional complex Euclidean space of operators~\eqref{bg}. Let also $\mathsf p=\begin{pmatrix}\partial_1 & \dots & \partial_n\end{pmatrix}^{\mathrm T}$ and $\mathsf x=\begin{pmatrix}x_1 & \dots & x_n\end{pmatrix}^{\mathrm T}$ be the columns of partial derivatives and corresponding variables. Then,
\begin{enumerate}\itemsep 0pt
\item \label{i:k} the set of all maximal isotropic subspaces $V \subset \mathcal V$ --- called \emph{isotropic Grassmannian} --- splits into two connected components, called below ``even'' and ``odd'',
\item \label{i:l} for a skew-symmetric matrix~$F$, the span of the elements of column $\mathsf p + F \mathsf x$ is a maximal isotropic subspace,
\item \label{i:m} the mapping
\begin{equation}\label{pFx}
F \mapsto \bigl(\text{span of the elements of column }(\mathsf p + F \mathsf x)\bigr)
\end{equation}
is a bijection from the set of all $n\times n$ skew-symmetric matrices~$F$ onto a Zariski open set in the ``even'' maximal isotropic subspace,
\item \label{i:n} choose now a subset $T\subset \{1,\dots,n\}$ of numbers from~$1$ through~$n$, and make, for every $t\in T$, the interchange of two elements $\partial_t \leftrightarrow x_t$ between $\mathsf p$ and~$\mathsf x$. If $T$ is of \emph{even} cardinality, then \eqref{pFx}, with thus modified $\mathsf p$ and~$\mathsf x$, is again a bijection from the set of (all $n\times n$ skew-symmetric) matrices~$F$ onto a Zariski open set in the same ``even'' subspace as in item~\ref{i:m},
\item \label{i:o} if $T$ is of \emph{odd} cardinality, then \eqref{pFx}, with $\mathsf p$ and~$\mathsf x$ modified correspondingly, is a bijection from the set of matrices~$F$ onto a Zariski open set in the ``odd'' maximal isotropic subspace.
\end{enumerate}
\end{theorem}

\begin{proof}
\begin{itemize}\itemsep 0pt
 \item[\ref{i:k}] That isotropic Grassmannian splits up in two connected components, is a known fact, see, e.g., \cite[Theorem~1]{KS2} for a simple proof.
 \item[\ref{i:l}] It is easily checked that, indeed, the scalar product of any two (coinciding or not) entries in the column $\mathsf p + F \mathsf x$ vanishes, and that these~$n$ entries are linearly independent.
 \item[\ref{i:m}] This follows from the obvious injectiveness of mapping~\eqref{pFx} and the fact that the (complex) dimensionality of both the space of matrices~$F$ and the isotropic Grassmannian is the same, namely $n(n-1)/2$. This dimensionality for the isotropic Grassmannian is again a known fact, for which we can again refer to the proof of \cite[Theorem~1]{KS2}, where isotropic subspaces~$V$ are parameterized in terms of orthogonal matrices, see~\cite[formula~(12)]{KS2}, at least in a neighborhood of any given~$V$.
 \item[\ref{i:n}] Assume, for simplicity, that $T=\{1,\dots,k\}$, with even $k\le n$, and write~$F$ in a block matrix form:
\[
F = \begin{pmatrix} A & B \\ -B^{\mathrm T} & C \end{pmatrix},
\]
with skew-symmetric $A$ and~$C$, and $A$ of sizes $k\times k$. Them, a small exercise shows that the same isotropic space is obtained after the interchanges $\partial_t \leftrightarrow x_t$, $t\in T$, provided $F$ is replaced with
\[
F' = \begin{pmatrix} A^{-1} & A^{-1}B \\ -B^{\mathrm T}A^{-1} & B^{\mathrm T}A^{-1}B+C \end{pmatrix}.
\]
As a generic \emph{even-dimensional} skew-symmetric matrix~$A$ is invertible, this shows that we get, indeed, into the same connected component of isotropic Grassmannian.
 \item[\ref{i:o}] Similar argument shows that all subsets~$T$ of odd cardinality lead to one and the same connected component of  isotropic Grassmannian. That this component is different from the ``even'' case, can be shown as follows. Any $n$-dimensional subspace in the space of $2n$-row vectors is the span of the rows of some matrix $\begin{pmatrix} F_1 & F_2 \end{pmatrix}$, where both blocks $F_1$ and~$F_2$ are of sizes $n\times n$. These blocks are determined to within a left multiplication by an arbitrary invertible matrix. Hence, in particular, the expression
\begin{equation}\label{F1F2}
(F_1-F_2)^{-1}(F_1+F_2)
\end{equation}
is an \emph{invariant} of the subspace as such.

Identifying a $2n$-row vector~$\begin{pmatrix} a_1 & \dots & a_n & b_1 & \dots & b_n \end{pmatrix}$ with the operator $a_1\partial_1 + \dots + a_n\partial_n + b_1 x_1 + \dots + b_n x_n$, we see that, for the span of column $\mathsf p + F \mathsf x$ entries, the expression~\eqref{F1F2} gives the identity matrix:
\[
(\mathbf 1_n - F)^{-1}(\mathbf 1_n + F )= \mathbf 1_n ,
\]
and can easily be shown to remain the same if we make an even number of changes $\partial_t \leftrightarrow x_t$ in all these entries. One can check, however, that for an odd number of changes, the expression~\eqref{F1F2} changes to~$-\mathbf 1_n$.
\end{itemize}
\end{proof}

\begin{theorem}\label{th:v}
Let $x_t$, $t=1,\ldots,n$ be generators of a Grassmann algebra over~$\mathbb C$, and $V \subset \mathcal V$ a maximal isotropic subspace in the space of operators~\eqref{bg}. Then,
\begin{enumerate}\itemsep 0pt
 \item\label{i:t} the nullspace of~$V$ (${}={}$the space of vectors annihilated by all elements in~$V$) is one-dimensional,
 \item\label{i:u} if $V$ is a subspace of the kind dealt with in items \ref{i:l} and~\ref{i:m} of Theorem~\ref{th:eo}, i.e.,
  \begin{equation}\label{Vs}
V = \bigl(\text{span of the elements of column }(\mathsf p + F \mathsf x)\bigr),
  \end{equation}
then the nullspace of\/~$V$ is spanned by the Grassmann--Gaussian exponent
  \begin{equation}\label{xFx}
  \mathcal W = \exp \left(-\frac{1}{2}\,\mathsf x^{\mathrm T} F \mathsf x \right),
  \end{equation}
 \item\label{i:1:v} after each interchange $\partial_t \leftrightarrow x_t$ between $\mathsf p$ and~$\mathsf x$ (like in items \ref{i:n} or~\ref{i:o} of Theorem~\ref{th:eo}), an element~$\mathcal W_{\mathrm{new}}$ spanning the nullspace of the new~$V$ can be obtained from an element~$\mathcal W_{\mathrm{old}}$ spanning the nullspace of the old~$V$ as:
  \begin{equation}\label{px}
\mathcal W_{\mathrm{new}} = (\partial_t-x_t)\mathcal W_{\mathrm{old}}.
  \end{equation}
\end{enumerate}
\end{theorem}

\begin{remark}
Of course, the operator in~\eqref{px} is involutive up to a numeric factor: $(\partial_t-x_t)^2=-1$.
\end{remark}

\begin{proof}[Proof of Theorem~\ref{th:v}]
Items \ref{i:t} and~\ref{i:u} make an easy variation on the theme of~\cite[Theorem~2]{KS2}. Item~\ref{i:1:v} is checked by a direct calculation.
\end{proof}

\section{Edge operators in a single 4-simplex}\label{s:single}

In this Section and in the next Section~\ref{s:pw}, we deal with just one 4-simplex~$12345$ (where $1,\ldots, 5$ are its vertices). We define a specific Grassmann weight for it as a Grassmann algebra element annihilated by a maximal isotropic subspace of operators, and we hope to demonstrate that our weight gives rise, already for a single 4-simplex, to an interesting \emph{nonlinear} algebraic topology.

\subsection{The 4-simplex weight}\label{ss:W}

We put a Grassmann variable~$x_t$ in correspondence to each of its five 3-faces --- tetrahedra
\begin{equation}\label{t}
t=2345,\; 1345,\; 1245,\; 1235\;\; \text{and} \;\; 1234.
\end{equation}
Then we proceed along the lines of Subsections \ref{ss:C} and~\ref{ss:max-iso}: consider the ten-dimensional space~$\mathcal V$ of all $\mathbb C$-linear combinations
\begin{equation}\label{bgt}
d = \sum_{t\subset 12345} (\beta_t \partial_t+\gamma_t x_t),
\end{equation}
define the scalar product for operators~\eqref{bgt} as their anticommutator, and choose a maximal isotropic subspace $V\subset \mathcal V$.

\begin{dfn}\label{dfn:W}
A nonzero element of the Grassmann algebra generated by the five~$x_t$ corresponding to the 3-faces of the 4-simplex, annihilated by a given maximal isotropic subspace $V\subset \mathcal V$, is called the 4-simplex \emph{quasi-Gaussian weight} (corresponding to~$V$).
\end{dfn}

In particular, in the ``even'' case of Theorem~\ref{th:eo}, almost all such 4-simplex weights are, up to a factor, Grassmann--Gaussian exponents that can be described as follows. Put a quantity~$\varphi_s \in \mathbb C$ in correspondence to each of the ten 2-faces~$s=ijk$, \ $1\le i<j<k\le 5$, of 4-simplex~$12345$. Then put
\begin{equation}\label{gg}
\mathcal W = \exp \Phi ,
\end{equation}
where $\Phi$ is the following Grassmannian quadratic form:
\begin{equation}\label{Phi}
\Phi = \sum_{\substack{\text{over 2-faces }ijk\\[.3ex] \text{ of }12345}} \epsilon_{lijkm} \, \varphi_{ijk} \, x_{\{ijkl\}} x_{\{ijkm\}}.
\end{equation}
In~\eqref{Phi},
\begin{itemize}\itemsep 0pt
 \item $l<m$ are the two vertices of~$12345$ that do \emph{not} enter in~$ijk$,
 \item $\epsilon_{lijkm}$ is the sign of permutation between the sequence in its subscripts and~$12345$, and
 \item the curly brackets mean that the numbers within them should be put in the increasing order, to represent a tetrahedron in~\eqref{t}, e.g., $\{2341\}=1234$.
\end{itemize} 

The exponent \eqref{gg},~\eqref{Phi} is characterized, up to a factor that does not depend on those~$x_t$ that enter in it, and in full accordance with formulas \eqref{pFx} and~\eqref{xFx}, as follows. Consider the following 5-columns of operators of left differentiations or left multiplications:
\begin{gather}
\mathsf p = \begin{pmatrix} \partial_{2345} & \partial_{1345} & \partial_{1245} & \partial_{1235} & \partial_{1234} \end{pmatrix}^{\mathrm T} , \label{p12345} \\ 
\mathsf x = \begin{pmatrix} x_{2345} & x_{1345} & x_{1245} & x_{1235} & x_{1234} \end{pmatrix}^{\mathrm T} , \label{x12345}
\end{gather}
and the following $5\times 5$ matrix:
\begin{equation}\label{F}
F = \begin{pmatrix}
            0 &  -\varphi_{345} &  \varphi_{245} & -\varphi_{235} &  \varphi_{234} \\
            \varphi_{345} & 0 &  -\varphi_{145} &  \varphi_{135} & -\varphi_{134}  \\
            -\varphi_{245} & \varphi_{145} &  0 & -\varphi_{125} &  \varphi_{124} \\
            \varphi_{235} & -\varphi_{135} &  \varphi_{125} & 0 &  -\varphi_{123}  \\
            -\varphi_{234} &  \varphi_{134} & -\varphi_{124} &  \varphi_{123} & 0 
\end{pmatrix} .
\end{equation}
Then, $\mathcal W$ spans the nullspace of the elements of column given by~\eqref{pFx}.

\subsection{Edge operators}\label{ss:eo}

Especially interesting operators in the space~$V$ are the following \emph{edge operators} that involve only the tetrahedra containing a given edge~$b$.

\begin{dfn}\label{dfn:edge}
An \emph{edge operator} for an edge $b=ij$ is any element in~$V$ having zero coefficients at $x_t$ and~$\partial_t$ if $t\not\supset b$.
\end{dfn}

For a given~$b$ and a \emph{generic}~$V$, the so defined edge operators are easily shown to form a one-dimensional linear space. This is illustrated below in Example~\ref{xmp:e}. We use notation~$d_b$ for an edge operator spanning this one-dimensional space; at this moment, $d_b$ is defined up to a numeric factor, but we will fix its normalization after Theorem~\ref{th:1b}.

\begin{xmp}\label{xmp:e}
Let $V$ be defined as in~\eqref{Vs}, with $\mathsf p$, $\mathsf x$ and~$F$ as in \eqref{p12345}, \eqref{x12345} and~\eqref{F}. Then, any edge operator can be written as
\begin{equation}\label{apx}
\alpha (\mathsf p + F\mathsf x),
\end{equation}
where
\[
\alpha = \begin{pmatrix} \alpha_{2345} & \alpha_{1345} & \alpha_{1245} & \alpha_{1235} & \alpha_{1234} \end{pmatrix}
\]
is a row of some numeric coefficients satisfying four linear relations (because there are two tetrahedra not containing a given edge~$b$; denote them $t_1$ and~$t_2$, then coefficients at $\partial_{t_1}$, $\partial_{t_2}$, $x_{t_1}$ and~$x_{t_2}$ in~\eqref{apx} must vanish). For a generic~$F$, this yields a one-dimensional space of~$\alpha$'s (see, however, the Remark below).

Explicitly, for the edge $b=12$, one gets
\begin{multline*}
\alpha_{2345} = \alpha_{1345} = 0,\qquad \alpha_{1245}=\varphi_{134}\varphi_{235}-\varphi_{135}\varphi_{234},\\
 \alpha_{1235}=\varphi_{134}\varphi_{245}-\varphi_{145}\varphi_{234},\qquad \alpha_{1234}=\varphi_{135}\varphi_{245}-\varphi_{145}\varphi_{235}
\end{multline*}
as one of the solutions, so any edge operator is, in the case of general~$F$, proportional to
\begin{multline}\label{phiii}
(\varphi_{134}\varphi_{235}-\varphi_{135}\varphi_{234}) \partial_{1245} 
+(\varphi_{134}\varphi_{245}-\varphi_{145}\varphi_{234}) \partial_{1235} \\
+(\varphi_{135}\varphi_{245}-\varphi_{145}\varphi_{235}) \partial_{1234} \\
-(\varphi_{124}\varphi_{135}\varphi_{245}-\varphi_{125}\varphi_{134}\varphi_{245}-\varphi_{124}\varphi_{145}\varphi_{235}+\varphi_{125}\varphi_{145}\varphi_{234}) x_{1245} \\
+(\varphi_{123}\varphi_{135}\varphi_{245}-\varphi_{123}\varphi_{145}\varphi_{235}-\varphi_{125}\varphi_{134}\varphi_{235}+\varphi_{125}\varphi_{135}\varphi_{234}) x_{1235} \\
-(\varphi_{123}\varphi_{134}\varphi_{245}-\varphi_{124}\varphi_{134}\varphi_{235}-\varphi_{123}\varphi_{145}\varphi_{234}+\varphi_{124}\varphi_{135}\varphi_{234}) x_{1234}\,.
\end{multline}
\end{xmp}

\begin{remark}
Of course, there are degenerate cases yielding the space of~$\alpha$'s of more than one dimension. In particular, if $F=0$, then this space is three-dimensional. In this paper we are, however, most interested in the general case.
\end{remark}

\begin{remark}
An edge operator for any other edge $b=ij$ can be obtained from~\eqref{phiii} by doing any permutation of indices such that $1\mapsto i$ and~$2\mapsto j$, if we also assume that $\varphi_{klm}$ is \emph{totally antisymmetric} in its indices (while $\partial_{pqrs}$ is totally \emph{symmetric}).
\end{remark}

\begin{lemma}\label{l:w}
In the general position case, the 10 edge operators span the whole maximal isotropic subspace~$V$.
\end{lemma}

\begin{proof}
Direct calculation.
\end{proof}

\subsection{Normalization of edge operators and the 2-cocycle}\label{ss:norm}

We want now to \emph{normalize} the edge operators~$d_b$ in some canonical way. Remarkably, such normalization exists and, moreover, reveals some beautiful properties of edge operators. It turns out that we must \emph{orient} all edges in our 4-simplex~$12345$: their (default) orientation is given, by definition, by the increasing order of vertices: $b=ij$, \ $i<j$, and if we want to change the orientation of an edge, this will imply changing the sign of the edge operator:
\begin{equation}\label{i-j}
d_{ji}=-d_{ij}.
\end{equation}
We also need some standard cohomological notions: a $\mathbb C$-valued function of vertices/edges/2-faces is called 0-cochain/1-cochain/2-cochain; there is also the coboundary operator~$\delta$ with coboundaries in its image and cocycles in its kernel. The operator~$d_c$ corresponding to a 1-cochain~$c$ is, by definition, the corresponding linear combination of edge operators (if their normalization has been chosen), namely,
\begin{equation}\label{dc}
d_c = \sum_{\text{all edges }b} c(b)d_b.
\end{equation}

The four edge operators corresponding to edges having a common vertex~$i$ are linearly dependent. This follows from the fact that they belong to the three-dimensional subspace of~$V$ not containing $\partial_t$ and~$x_t$, where tetrahedron~$t$ lies opposite vertex~$i$. Remarkably, these linear dependencies can be described as follows.

\begin{theorem}\label{th:1b}
The edge operators for the ten oriented edges can be normalized in such way that zero will correspond to any 1-coboundary. Such normalization is unique up to a common factor.
\end{theorem}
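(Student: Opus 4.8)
The plan is to turn the asserted property into an explicit system for the unknown normalizing factors and to recognize its solvability as a discrete cohomology condition on the $1$--skeleton (the complete graph $K_5$) of $12345$.

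First I would fix, for each edge $b$, an arbitrary nonzero representative $d_b$ of the one--dimensional space of edge operators, and look for scalars $\alpha_b\in\mathbb C^{*}$ such that the renormalized operators $\tilde d_b=\alpha_b d_b$ have the required property. Since the operator attached to a $1$--cochain is by definition the corresponding combination of normalized edge operators, and since the $1$--coboundaries are spanned by the $\delta f$ for $0$--cochains $f=\sum_k f_k\cdot k$, with $(\delta f)_{ij}=f_j-f_i$ on the edge $ij$ oriented by $i<j$, the desired vanishing $\sum_{i<j}(f_j-f_i)\tilde d_{ij}=0$ for all $f$ is equivalent, by collecting the coefficient of each $f_k$, to the five \emph{vertex relations} $\sum_{i<k}\tilde d_{ik}-\sum_{k<j}\tilde d_{kj}=0$, $k=1,\dots,5$. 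These five relations have identically vanishing sum, so only four of them are independent, matching $\dim\delta C^0=5-1=4$.

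Next I would invoke the linear dependence noted just before the theorem: at each vertex $k$ the four incident edge operators span a three--dimensional subspace of $V$, so generically (on a Zariski open set) they satisfy a \emph{unique up to scale} relation $\sum_{b\ni k}r^{(k)}_b d_b=0$ with all four coefficients $r^{(k)}_b$ nonzero. Comparing this with the wanted vertex relation forces, for $b=ij$, the two readings $\alpha_{ij}=-c_i r^{(i)}_{ij}$ (from the relation at $i$, where $ij$ is outgoing) and $\alpha_{ij}=c_j r^{(j)}_{ij}$ (from the relation at $j$, where $ij$ is incoming), for some per--vertex scalars $c_k$. Consistency of the two readings is exactly
\[
\frac{c_j}{c_i}=\rho_{ij},\qquad \rho_{ij}:=-\frac{r^{(i)}_{ij}}{r^{(j)}_{ij}} .
\]
Thus the normalization exists iff the $\mathbb C^{*}$--valued $1$--cochain $\rho$ on $K_5$ is a multiplicative coboundary. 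Propagating $c_k$ along the spanning tree given by the star at vertex $1$ (so $c_k=\rho_{1k}$), the only remaining constraints are the product--around--a--triangle identities $\rho_{1j}\rho_{jk}\rho_{k1}=1$ for the fundamental cycles $1jk$; since the cycle space of $K_5$ is generated by triangles, solvability is \emph{equivalent} to $\rho_{ij}\rho_{jk}\rho_{ki}=1$ for every triangle $ijk$, and by the evident $S_5$--symmetry permuting vertices it suffices to verify this for a single triangle.

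The main obstacle is precisely this triangle identity, which after substituting the ratios of the vertex--relation coefficients becomes
\[
\frac{r^{(i)}_{ij}}{r^{(i)}_{ik}}\cdot\frac{r^{(j)}_{jk}}{r^{(j)}_{ij}}\cdot\frac{r^{(k)}_{ik}}{r^{(k)}_{jk}}=-1 ;
\]
writing the edge operators in the quasi--Gaussian case from the matrix $F$ of~\eqref{F} (as in the displayed example for $b=12$) turns this into a polynomial identity among the $\varphi_{ijk}$. This is exactly the ``remarkable property found with the aid of computer algebra'' alluded to in the introduction, and it carries the real content of the theorem; I expect to establish it by this explicit expansion. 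Granting it, the system $c_j/c_i=\rho_{ij}$ is solvable, which proves existence. Uniqueness is then immediate: if $\alpha_b d_b$ and $\alpha'_b d_b$ both work, then at each vertex both give the same up--to--scale relation, so $\alpha'_b/\alpha_b=c'_k/c_k$ for every $b\ni k$; reading this along edges shows $c'_k/c_k$ is constant on the connected graph $K_5$, whence $\alpha'_b=\lambda\alpha_b$ with a single common factor $\lambda$.
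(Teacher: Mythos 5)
Your proposal is structurally correct and goes well beyond the paper's own proof, which consists of the single phrase ``Direct calculation'' --- presumably meaning: compute all ten edge operators explicitly from the matrix $F$ of~\eqref{F}, write down the five vertex dependencies, and solve for the normalizing factors by brute force. What you do differently is to organize that computation cohomologically: you identify the coboundary condition with the five vertex relations (of which four are independent), note that at each vertex the generically unique dependency $\sum_{b\ni k} r^{(k)}_b d_b=0$ pins each $\alpha_{ij}$ down to two readings whose consistency is a multiplicative condition $c_j/c_i=\rho_{ij}$ on the $1$-skeleton $K_5$, and reduce solvability to the triangle identity $\rho_{ij}\rho_{jk}\rho_{ki}=1$, which by the $S_5$-equivariance of the whole construction need only be verified for one triangle. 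This buys a drastically smaller verification and, as a bonus, makes the ``unique up to a common factor'' clause transparent (the ratio $c'_k/c_k$ is a $\mathbb C^{*}$-valued function that must be constant on the connected graph $K_5$). Two caveats. First, the remaining triangle identity is still the real content of the theorem and must actually be expanded --- it is a rational identity in the $\varphi_{ijk}$, checkable from the displayed example of $d_{12}$ and its permuted versions --- so until you do that expansion your argument is a reduction of the paper's ``direct calculation'' rather than a replacement for it. Second, you should state explicitly the genericity hypotheses you rely on (each vertex dependency is unique up to scale and has all four coefficients nonzero), which is consistent with the paper's generic framing, and observe that the ``odd'' quasi-Gaussian case follows from the even one because the interchanges~\eqref{interc} are orthogonal transformations preserving the support condition that defines edge operators, hence preserving all their linear dependencies.
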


In other words, with this normalization, the operator~$d_{\delta i}$, corresponding to the coboundary~$\delta i$ of any vertex~$i$ according to~\eqref{dc}, vanishes.

\begin{proof}
First, we choose \emph{some} edge operators~$d_b$, for instance, according to formula~\eqref{phiii} and the second Remark after it. For convenience, we also choose their signs in such way that the antisymmetry condition~\eqref{i-j} holds.

Second, we calculate the linear dependencies between them, mentioned in the paragraph before this Theorem, namely, for each vertex~$i$, the coefficients~$\beta_{ij}$ in
\begin{equation}\label{beta}
\sum_{\substack{j\neq i\\ i\text{ fixed}}} \beta_{ij}d_{ij}=0,\qquad i=1,\dots,5.
\end{equation}
Explicit expressions for~$\beta_{ij}$ are cumbersome and are not written out here; calculations with them have been done using computer algebra.

Third, as $\beta_{ij}$'s in each of five linear dependencies in~\eqref{beta} are determined only to within a multiplicative constant (depending only on~$i$), we consider their \emph{ratios}~$\beta_{ij}/\beta_{ik}$, and unexpectedly find that
\begin{equation}\label{ijk}
\frac{\beta_{ij}}{\beta_{ik}} \frac{\beta_{jk}}{\beta_{ji}} \frac{\beta_{ki}}{\beta_{kj}} = 1
\end{equation}
for any triangle~$ijk$.

Fourth, \eqref{ijk} means that the values $\gamma_{ij}=\beta_{ij}/\beta_{ji}$ make a \emph{multiplicative cocycle}:
\[
\gamma_{ij}\gamma_{jk}\gamma_{kj}=1,\qquad \gamma_{ji}=\gamma_{ij}^{-1},
\]
which is also --- as everything happens in a single 4-simplex --- a multiplicative coboundary:
\[
\gamma_{ij}=\frac{\mathrm c_i}{\mathrm c_j},
\]
with some nonzero values~$\mathrm c_i$ in the vertices.

Fifth, consider the normalized edge operators
\begin{equation}\label{norm}
d_{ij}^{\mathrm{norm}} = \mathrm c_i \beta_{ij}d_{ij},
\end{equation}
then
\[
d_{ji}^{\mathrm{norm}}=-d_{ij}^{\mathrm{norm}}
\]
and
\[
\sum_{\substack{j\neq i\\ i\text{ fixed}}}d_{ij}^{\mathrm{norm}}=0,\qquad i=1,\dots,5.
\]
\end{proof}

From now on, we fix a (nonzero) normalization according to Theorem~\ref{th:1b}, and write simply~$d_b$ instead of~$d_b^{\mathrm{norm}}$.

\medskip

There are 10 edges in our 4-simplex~$12345$, and only 4 linearly independent 1-coboundaries. As the space spanned by the edge operators, for a given~$\mathcal W$, must be 5-dimensional, there must be one more linear dependence between them --- one more (non-cobound\-ary) 1-cochain~$\nu$ whose corresponding operator vanishes:
\begin{equation}\label{nu}
d_{\nu}=0.
\end{equation}
As neither adding any coboundary to~$\nu$ nor multiplying~$\nu$ by a nonzero factor changes the set of linear dependencies between edge operators, the essential part of~$\nu$ is contained in \emph{its} coboundary --- 2-cocycle
\begin{equation}\label{odn}
\omega=\delta\nu,
\end{equation}
taken also to within a nonzero factor.

\begin{dfn}\label{dfn:W-cocycle}
A nonvanishing 2-cocycle~$\omega$ corresponding to a quasi-Gaussian 4-simplex weight~$\mathcal W$ as explained above is called \emph{$\mathcal W$-cocycle}.
\end{dfn}

\section{2-cocycles parameterize 4-simplex weights}\label{s:pw}

\subsection{The general theorem}\label{ss:pw}

\begin{theorem}\label{th:2c}
Nonvanishing 2-cocycles~$\omega$ on a 4-simplex parameterize quasi-Gaussian weights~$\mathcal W$ in the following sense:
\begin{enumerate}\itemsep 0pt
 \item\label{i:v} generic~$\mathcal W$ determines $\omega$ --- its $\mathcal W$-cocycle, see Definition~\ref{dfn:W-cocycle} --- up to a factor,
 \item\label{i:w} generically, five (${}={}$maximal number of) independent ratios of the values~$\omega_{ijk}$ --- components of our 2-cycle~$\omega$ --- determine the maximal isotropic space for~$\mathcal W$ to within renormalizations
 \begin{equation}\label{renorm}
 x_t \mapsto \lambda_t x_t, \quad \partial_t \mapsto (1/\lambda_t) \partial_t, \quad \lambda_t\in \mathbb C,
 \end{equation}
 of five boundary Grassmann variables and possible interchanges
 \begin{equation}\label{interc}
 \partial_t \leftrightarrow x_t.
 \end{equation}
\end{enumerate}
\end{theorem}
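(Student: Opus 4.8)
The plan is to treat the two assertions asymmetrically: \eqref{i:v} is essentially a restatement of the construction preceding Definition~\ref{dfn:W-cocycle}, whereas \eqref{i:w} is the genuine inverse problem and carries the real content. For \eqref{i:v} I would simply retrace that construction. A generic even $\mathcal W$ yields a definite maximal isotropic $V$ and, inside it, the ten edge operators of Definition~\ref{dfn:edge}, each spanning a one-dimensional space. Theorem~\ref{th:1b} normalizes them simultaneously so that every $1$-coboundary is annihilated, pinning the normalization down up to one common scalar. The remaining linear dependence $\nu$ among the $d_b$ is then determined up to a coboundary and up to scale, so $\omega=\delta\nu$ is determined up to a single overall factor; hence $\mathcal W\mapsto\omega$ is a well-defined map into $\mathbb P\bigl(Z^2\bigr)$, which is exactly \eqref{i:v}.

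Before inverting, I would match dimensions, since this both explains the phrase ``five independent ratios'' and constrains what a successful inversion must look like. In the even case the weights form the $10$-parameter family $\{\varphi_{ijk}\}$ of \eqref{Phi}--\eqref{F}. A renormalization~\eqref{renorm} by $(\lambda_t)$ acts on these parameters by $\varphi_{ijk}\mapsto\lambda_{t_a}\lambda_{t_b}\,\varphi_{ijk}$, where $a,b$ are the two vertices absent from the $2$-face $ijk$ and $t_a,t_b$ the two tetrahedra containing it; in logarithmic coordinates the induced linear map $\mathbb C^5\to\mathbb C^{10}$, $v\mapsto(v_a+v_b)_{a<b}$, is at once seen to be injective, so the renormalization gauge is genuinely $5$-dimensional and the quotient family has dimension $10-5=5$. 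On the cocycle side, since a single $4$-simplex is contractible, $H^2=0$ gives $Z^2=B^2$ with $\dim Z^2=\dim C^1-\dim Z^1=10-4=6$, so $\mathbb P(Z^2)$ has dimension $5$: the cocycle carries exactly five independent ratios. Thus \eqref{i:v}--\eqref{i:w} assert a birational correspondence between two $5$-dimensional spaces.

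For \eqref{i:w} itself I would make the dependence of $\omega$ on $\{\varphi_{ijk}\}$ explicit. As the Example of Subsection~\ref{ss:eo} shows, the coefficients of the edge operators are the $2\times2$ and $3\times3$ minor-type polynomials in the $\varphi$'s; carrying these through the normalization of Theorem~\ref{th:1b} and through $\omega=\delta\nu$ produces explicit rational expressions $\omega_{ijk}=\omega_{ijk}(\varphi)$. I would then verify two things. First, equivariance: under~\eqref{renorm} the whole collection $(\omega_{ijk})$ rescales by a single common factor, so the five ratios are renormalization-invariant and the map descends to the quotient (this is also the consistency needed for \eqref{i:v}). Second, generic invertibility: on a Zariski-open set one solves the explicit relations back for the gauge class of $(\varphi_{ijk})$, equivalently one checks that the Jacobian of the induced map between the two $5$-dimensional quotients is generically nonzero and that its generic fibres are single renormalization orbits. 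This recovers the even maximal isotropic space of $\mathcal W$ to within~\eqref{renorm}. The interchanges~\eqref{interc} then account for the odd component: by Theorem~\ref{th:v}\eqref{i:1:v} an odd-case weight is obtained from an even one by applying $(\partial_t-x_t)$, i.e.\ by an interchange $\partial_t\leftrightarrow x_t$, so admitting these interchanges extends the parameterization verbatim to all maximal isotropic spaces.

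The step I expect to be the main obstacle is the generic invertibility in the previous paragraph. The dimension count guarantees only that the correspondence \emph{could} be birational; establishing that the explicit map $\varphi\mapsto\omega$ really has renormalization orbits as its generic fibres, with no accidental extra symmetry collapsing them further, is precisely the point where a direct (and, as the paper indicates, computer-algebra-assisted) computation seems unavoidable.
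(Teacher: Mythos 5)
Your treatment of item~\ref{i:v} and of the surjectivity half of item~\ref{i:w} (the dimension count and the generic rank of the Jacobian $(\partial w_m/\partial\varphi_{ijk})$) matches the paper. The gap is in the uniqueness half, which is the real content of item~\ref{i:w} and which you yourself flag as ``the main obstacle'' and leave to an unspecified computation. Two problems. First, the property you propose to verify --- that the generic fibre of $\varphi\mapsto\omega$ is a single renormalization orbit --- is false: as the paper notes in Subsection~\ref{ss:compare}, an \emph{even} number of interchanges~\eqref{interc} can carry one Grassmann--Gaussian exponent to another with the same cocycle, so generic fibres are unions of renormalization orbits related by even numbers of interchanges. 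You invoke~\eqref{interc} only as the bridge to the odd component, which misses this; a literal computational check of your claim would come out negative and you would have to reformulate before proceeding. Second, even with the corrected statement, ``solve the explicit rational relations back for the gauge class'' is not yet an argument: a nonvanishing Jacobian between the five-dimensional quotients gives only generic finiteness of the fibres, not that each fibre is exactly one gauge orbit.

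The paper closes exactly this gap by a structural argument absent from your plan. Given two systems $\{d_a\}$, $\{d'_a\}$ of normalized edge operators with the same cocycle, it compares the \emph{partial scalar products} $\langle d_a,d_b\rangle_t$ of~\eqref{psc}: the orthogonality of opposite edges (whose stars meet only in~$t$), together with the four linear dependencies among the six $t$-components (three from coboundaries, one from~$\omega$), pin down the bilinear form $\langle\cdot\,,\cdot\rangle_t$ up to a scalar~$c_t$; the relation $\langle d_{ij},d_{ik}\rangle_t+\langle d_{ij},d_{ik}\rangle_{t'}=0$ for adjacent $t,t'$ forces all the~$c_t$ to coincide; and then $d_a$ and $d'_a$ differ, component by component, by an orthogonal transformation of the two-dimensional space~$\mathcal V_t$, which is precisely a renormalization~\eqref{renorm} possibly composed with an interchange~\eqref{interc}. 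To complete your route you would need either to reproduce this argument or to carry out the fibre computation honestly, with the discrete interchange ambiguity built in from the start.
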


\begin{dfn}\label{dfn:gauge}
We call the transformations of quasi-Gaussian weights corresponding to \eqref{renorm} and~\eqref{interc} \emph{gauge transformations}, and refer to the possibility of making gauge transformations as \emph{gauge freedom}.
\end{dfn}

\begin{remark}
Gauge transformations \eqref{renorm} and~\eqref{interc} are, of course, simply orthogonal transformations in the subspace~$\mathcal V_t$ spanned by $\partial_t$ and~$x_t$, and any such orthogonal transformation is either~\eqref{renorm} or its composition with~\eqref{interc}.
\end{remark}

\begin{proof}[Proof of Theorem~\ref{th:2c}]
\begin{itemize}\itemsep 0pt
 \item[\ref{i:v}] This has been already explained in Subsection~\ref{ss:norm}.
 \item[\ref{i:w}]
First, some~$\mathcal W$ does correspond to any point in a Zariski open subset in the space of parameters~$w_m$, where $w_m$, $m=1,\ldots,5$, are the independent ratios of the values~$\omega_{ijk}$ mentioned in the Theorem. To show this, it is enough to consider the ``even'' case and note that the rank of the ($5\times 10$) Jacobian matrix $(\partial w_m / \partial \varphi_{ijk})$ is generically~5, which can be checked by a direct calculation.

Now, let there be two 4-simplex quasi-Gaussian weights $\mathcal W$ and~$\mathcal W'$ with the two respective sets $\{d_a\}$ and~$\{d'_a\}$ of normalized edge operators, and let the above mentioned parameters~$w_m$ be the same for $\mathcal W$ and~$\mathcal W'$. For a pair $a,b$ of edges, the scalar product between $d_a$ and~$d_b$ --- which of course vanishes --- is a sum of ``partial scalar products'' over five tetrahedra, compare formulas \eqref{sc} and~\eqref{sct}:
\begin{equation}\label{psc}
\langle d_a, d_b \rangle = \sum_{t\subset 12345} \langle d_a, d_b \rangle_t.
\end{equation}
We will show that these partial scalar products within the two sets are the same up to a factor:
\begin{equation}\label{dt}
\langle d_a, d_b \rangle_t = c \langle d'_a, d'_b \rangle_t
\end{equation}
with the same~$c$ for all pairs $a,b$ of edges and all~$t$. Once this is done, it is an easy exercise to show (see the Remark before this proof) that every~$d_a$ is obtained from~$d'_a$, first, by multiplying it by~$\sqrt{c}$, and then, by a set (independent of~$a$) of renormalizations~\eqref{renorm} and --- maybe for some~$t$ --- interchanges~\eqref{interc}. Due to Lemma~\ref{l:w}, this will be enough to prove our Theorem. So, below we prove formula~\eqref{dt}.

There are three pairs of opposite edges in tetrahedron~$t$; denote the first pair as~$a_1,b_1$, the second~$a_2,b_2$, and the third~$a_3,b_3$. As the intersection of stars of two opposite edges is just one tetrahedron~$t$, it follows that
\begin{equation}\label{ab}
\langle d_{a_1}, d_{b_1} \rangle_t = \langle d_{a_1}, d_{b_1} \rangle = 0,
\end{equation}
and similarly
\begin{equation}\label{abp}
\langle d_{a_2}, d_{b_2} \rangle_t = 0, \quad \langle d_{a_3}, d_{b_3} \rangle_t = 0.
\end{equation}

We denote the subspace of~$\mathcal V$ spanned by $\partial_t$ and~$x_t$ as~$\mathcal V_t$ (like we did in Subsection~\ref{ss:C}. To the direct sum decomposition $\mathcal V=\bigoplus_{t\subset 12345}\mathcal V_t$ corresponds the decomposition of operators for which we will use the notations like $d_a=\sum_{t\subset 12345}d_a|_t$, and $d_a|_t$ will be called \emph{$t$-component} of~$d_a$. Clearly, replacing operators with their $t$-components does not change partial scalar products associated with tetrahedron~$t$.

There are four linear dependencies between $d_{a_1}$, $d_{b_1}$, $d_{a_2}$, $d_{b_2}$, $d_{a_3}$ and~$d_{b_3}$, and hence between their $t$-components: three of them arise from Theorem~\ref{th:1b} (and thus have fixed coefficients~$\pm 1$), and one more comes from the 2-cocycle~$\omega$. A small exercise in linear algebra (actually done in the proof of Theorem~\ref{th:1:iso} below) shows that these linear dependencies, together with \eqref{ab} and~\eqref{abp}, fix the partial scalar product $\langle d_{\ldots}, d_{\ldots} \rangle_t$ up to a factor. Moreover, the special form of our linear dependencies is responsible for the fact that such a nontrivial scalar product exists at all.

As the analogues of \eqref{ab} and~\eqref{abp}, as well as the same four linear dependencies, apply also to primed operators~$d'_{\ldots}$, we have proved~\eqref{dt} \emph{for each tetrahedron~$t$ separately}; what remains is to prove that the factor~$c$ does not depend on~$t$.

It is enough to show how the ratio of these factors is fixed for two \emph{adjacent} tetrahedra. Call them $t$ and~$t'$, and their common 2-face~$ijk$. We consider operators $d_{ij}$ and~$d_{ik}$; their scalar product involves only tetrahedra $t$ and~$t'$ and must vanish:
\[
\langle d_{ij}, d_{ik} \rangle = \langle d_{ij}, d_{ik} \rangle_t + \langle d_{ij}, d_{ik} \rangle_{t'} = 0.
\]
This clearly fixes the mentioned ratio.
\end{itemize}
\end{proof}

\subsection{Superisotropic operators: definition and construction from a given 2-cocycle}\label{ss:s-iso}

Theorem~\ref{th:2c} gives no explicit expression for a quasi-Gaussian weight~$\mathcal W$ in terms of a given 2-cocycle~$\omega$ --- its item~\ref{i:w} just states that the components of~$\mathcal W$ are determined by a system of algebraic equations, up to transformations corresponding to \eqref{renorm} and~\eqref{interc}. Using, if needed, properly chosen interchanges~\eqref{interc}, we can assume that $\mathcal W$ is a Gaussian weight~\eqref{gg}. Restricting ourself to this case, we are going to show how a weight~\eqref{gg} can be retrieved efficiently from a generic cocycle~$\omega$.

Here we define our main tool to be used for this purpose --- \emph{superisotropic operators}, and give their explicit construction starting from a given $\mathcal W$-cocycle~$\omega$. Then, we study further our operators and their relations with matrix~$F$~\eqref{F} in Subsection~\ref{ss:compare}, and give some elegant explicit formulas using elliptic functions in Subsection~\ref{ss:e}.

\begin{dfn}\label{dfn:s-iso}
A \emph{superisotropic operator} is such an isotropic operator of the form~\eqref{bgt} that, for each 3-face~$t$ of the 4-simplex~$12345$, either $\beta_t=0$ or $\gamma_t=0$.
\end{dfn}

\begin{xmp}\label{xmp:s-iso}
The operators entering the 5-col\-umn~\eqref{pFx} are superisotropic.
\end{xmp}

An obvious characteristic property of superisotropic operators among all operators~$d$~\eqref{bgt} is that \emph{each $t$-component}
\begin{equation}\label{tc}
d|_t = \beta_t \partial_t + \gamma_t x_t
\end{equation}
is isotropic.

\begin{remark}
The definition~\eqref{tc} of $t$-component agrees, of course, with what we have already used in the proof of Theorem~\ref{th:2c}.
\end{remark}

\begin{remark}
In the notations of this paper, the superisotropicity of~$f$ can be written as either $\langle f|_t,f|_t \rangle=0$ or, equivalently, $\langle f,f \rangle_t = 0$, which must hold for all 3-faces $t\subset \{12345\}$.
\end{remark}

We will construct some superisotropic operators~$f$, including operators proportional to those entering the 5-col\-umn~\eqref{pFx}, in the form of linear combinations
\begin{equation}\label{1:f}
f = \sum_{1\le i<j\le 5} \alpha_{ij} d_{ij},\qquad \alpha_{ij}\in\mathbb C,
\end{equation}
of edge operators.

For the first operator we are going to construct, the coefficients~$\alpha_{ij}$ in~\eqref{1:f} are defined as follows. For our 2-cocycle~$\omega$, consider the square roots $\sqrt{ \omega_s }$ for all 2-faces~$s$, writing these latter as
\[
s=ijk,\qquad i<j<k.
\]
For each of the square roots, we choose arbitrarily (and fix) one of its two possible values. Similarly, we denote the edges as $b=ij$, with $i<j$. We put then
\begin{equation}\label{1:alpha}
\alpha_b = \prod_{\substack{s\supset b\\ \mathrm{or}\;s\cap b=\emptyset}} \sqrt{ \omega_s }\,.
\end{equation}
For instance, the coefficient at~$d_{12}$ is
\[
\alpha_{12} = \sqrt{ \omega_{123} } \sqrt{ \omega_{124} } \sqrt{ \omega_{125} } \sqrt{ \omega_{345} }
\]

\begin{theorem}\label{th:1:iso}
If the coefficients~$\alpha_b$ in a linear combination~\eqref{1:f} of edge operators are as in~\eqref{1:alpha}, then $f$ is superisotropic.
\end{theorem}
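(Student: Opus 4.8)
The plan is to prove superisotropy one 3-face at a time. By the remark following Definition~\ref{dfn:s-iso}, it suffices to show $\langle f,f\rangle_t=0$ for every 3-face $t$ of $12345$. Writing the $t$-component as $f|_t=B_t\partial_t+\Gamma_t x_t$, where $B_t=\sum_b\alpha_b\beta_{b,t}$ and $\Gamma_t=\sum_b\alpha_b\gamma_{b,t}$ collect the $\partial_t$- and $x_t$-coefficients of $f$, formula~\eqref{sct} gives $\langle f,f\rangle_t=2B_t\Gamma_t$. Hence the whole task reduces to showing that, for each $t$, at least one of $B_t,\Gamma_t$ vanishes; equivalently, that $f|_t$ lies on one of the two isotropic lines $\partial_t$ or $x_t$ in $\mathcal V_t$.

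Next I would fix $t=ijkl$, with $m$ the remaining vertex, and analyze the six $t$-components $v_b:=d_b|_t$ of the edge operators of the six edges of $t$. Using the coboundary normalization (Theorem~\ref{th:1b}) restricted to $t$ --- where the edges to $m$ drop out because $d_{pm}|_t=0$ --- the four vertex relations (of which three are independent) express $v_{il},v_{jl},v_{kl}$ through $v_{ij},v_{ik},v_{jk}$, while the one remaining (non-coboundary) dependence becomes, after substituting $\omega=\delta\nu$,
\[
\omega_{ijl}\,v_{ij}+\omega_{ikl}\,v_{ik}+\omega_{jkl}\,v_{jk}=0 .
\]
Pairing this relation with each of $v_{ij},v_{ik},v_{jk}$ and combining with the opposite-edge vanishing $\langle d_a,d_b\rangle_t=0$ of \eqref{ab}--\eqref{abp}, one pins down the full $3\times3$ Gram matrix of partial scalar products $\langle v_\cdot,v_\cdot\rangle_t$ in terms of $\omega_{ijl},\omega_{ikl},\omega_{jkl}$ up to a single overall factor $p$. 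This is precisely the ``small exercise in linear algebra'' promised in the proof of Theorem~\ref{th:2c}, and it needs neither the Gaussian form of $\mathcal W$ nor any explicit normalizing factors.

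Then I would convert the goal into a polynomial identity. Re-expressing $f|_t$ in the basis $v_{ij},v_{ik},v_{jk}$ produces coefficients $A=\alpha_{ij}-\alpha_{il}+\alpha_{jl}$, $B=\alpha_{ik}-\alpha_{il}+\alpha_{kl}$, $C=\alpha_{jk}-\alpha_{jl}+\alpha_{kl}$; from the explicit shape~\eqref{1:alpha}, each of $A,B,C$ factors out a common $\sqrt{\omega_{ijl}}$, $\sqrt{\omega_{ikl}}$, $\sqrt{\omega_{jkl}}$, leaving factors $\tilde A,\tilde B,\tilde C$ that are linear in $g:=\sqrt{\omega_{ijk}}$ and in the three opposite-pair products $\mu_1=\sqrt{\omega_{ijm}\omega_{klm}}$, $\mu_2=\sqrt{\omega_{ikm}\omega_{jlm}}$, $\mu_3=\sqrt{\omega_{ilm}\omega_{jkm}}$. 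Substituting the Gram matrix, $\langle f|_t,f|_t\rangle$ becomes $p$ times a quartic expression in these square roots.

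The main obstacle is the vanishing of that quartic, and the clean way to organize it is to sort by powers of $g$. The coefficient of $g^1$ collapses to $0$ identically, by a short monomial-by-monomial cancellation; and the remaining $g^2$- and $g^0$-parts cancel precisely upon invoking the single cocycle identity $\omega_{ijk}=\omega_{jkl}-\omega_{ikl}+\omega_{ijl}$, i.e.\ $\delta\omega=0$ on $t$ itself. Thus the statement rests on the cocycle condition restricted to each 3-face, together with the remarkable fact that the square-root coefficients~\eqref{1:alpha} are tuned exactly so that this cancellation occurs. I would note finally that the argument uses only edge operators, their coboundary normalization, the cocycle relation, and opposite-edge orthogonality, so it applies verbatim to a general quasi-Gaussian weight, not merely the Gaussian case.
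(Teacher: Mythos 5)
Your argument is correct; I verified the final cancellation. In the notation $g=\sqrt{\omega_{ijk}}$, $a=\sqrt{\omega_{ijl}}$, $b=\sqrt{\omega_{ikl}}$, $c=\sqrt{\omega_{jkl}}$ one finds $a\tilde A-b\tilde B+c\tilde C=g\,(a\mu_1-b\mu_2+c\mu_3)$ identically, the odd-in-$g$ part of the quartic vanishes monomial by monomial, and after substituting $g^2=a^2-b^2+c^2$ the whole expression collapses to $g^2(a\mu_1-b\mu_2+c\mu_3)^2-g^2(a\mu_1-b\mu_2+c\mu_3)^2=0$, exactly as you predict. Your route differs from the paper's in its final step. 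The paper groups the six $t$-components into the three opposite-edge pairs \eqref{pha}--\eqref{phc}, shows each pair-sum $\alpha_a d_a|_t+\alpha_b d_b|_t$ is isotropic by combining the orthogonality~\eqref{d} with the norm relation~\eqref{nd}, and shows the three pair-sums are mutually proportional via the explicit expressions~\eqref{dd}; isotropy of $f|_t$ then follows because a sum of proportional isotropic vectors is isotropic. You instead determine the full Gram matrix of $v_{ij},v_{ik},v_{jk}$ up to one scalar and verify the vanishing of $\langle f|_t,f|_t\rangle$ as a single polynomial identity. The ingredients are the same --- Lemma~\ref{l:easy}, opposite-edge orthogonality, the square-root structure of~\eqref{1:alpha}, and the cocycle condition $\delta\omega=0$ restricted to~$t$, which in the paper is used implicitly when deriving the coefficient $\omega_{123}$ in~\eqref{dd} --- but your organization makes the role of the cocycle condition explicit and self-contained, at the price of losing the structural byproduct (the proportionality of the three pair-sums) that the paper's decomposition exhibits. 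One minor caveat: your division by $\omega_{jkl}$ etc.\ when solving for the Gram matrix requires the same genericity assumptions as the paper's formula~\eqref{dd}, which is harmless but worth stating.
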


First, the following easy lemma. In it, we consider a tetrahedron~$t$ in our 4-simplex~$12345$ and the space~$\mathcal V_t$ spanned by two operators~$\partial_t$ and~$x_t$. Also, the six $t$-components~$d_b|_t$ of edge operators for $b\subset t$ belong to~$\mathcal V_t$ as well. Hence, there must be at least four linear relations between these~$d_b|_t$. Lemma~\ref{l:easy} precises this argument.

\begin{lemma}\label{l:easy}
For a tetrahedron~$t=ijkl\subset 12345$ and a generic 2-cocycle~$\omega$, there are exactly four independent linear relations between the $t$-components of edge operators corresponding to edges $b\subset t$. They can be taken as follows: any three of the restrictions onto~$t$ of the (four) relations $d_{\delta i}=0$, \dots, $d_{\delta l}=0$ corresponding to the coboundaries, and the restriction of~\eqref{nu}.
\end{lemma}

These relations, for $t=1234$, are written explicitly below as formulas~\eqref{5u}.

\begin{proof}
It is clear that the $t$-components indeed obey the mentioned relations; it may only be worth mentioning that edges lying outside tetrahedron~$t$ make, indeed, no contribution in these relations, because the corresponding $t$-components vanish. That there are \emph{no more} independent linear relations, follows from the easily checked fact that any two operators $d_a$ and~$d_b$, \ $a,b\subset t$, have, in a general position, linearly independent $t$-components.
\end{proof}

\begin{proof}[Proof of Theorem~\ref{th:1:iso}]
Consider, for instance, the tetrahedron $t=1234$. For it, $f|_t$ can be represented as the sum of the following three expressions:
\begin{gather}
(\alpha_{12}d_{12}+\alpha_{34}d_{34})|_{1234}, \label{pha} \\
 (\alpha_{13}d_{13}+\alpha_{24}d_{24})|_{1234} \label{phb} \\
 \text{ and}\quad (\alpha_{14}d_{14}+\alpha_{23}d_{23})|_{1234}. \label{phc}
\end{gather}
We claim that \emph{each} of operators \eqref{pha}, \eqref{phb} and~\eqref{phc} is already isotropic and, moreover, they all are \emph{proportional} to each other.

First, we note that, because the stars of edges $12$ and~$34$ have exactly one tetrahedron~$1234$ in common,
\begin{equation}\label{d}
\langle d_{12},d_{34} \rangle_{1234} = \langle d_{12},d_{34} \rangle = 0.
\end{equation}
So, $d_{12}$ and~$d_{34}$ make an orthogonal basis in the linear space~$\mathcal V_{1234}$ spanned by $\partial_{1234}$ and~$x_{1234}$, with respect to the scalar product $\langle \,\cdot\,,\,\cdot\, \rangle_{1234}$. Similar statements hold, of course, also for the two other pairs of opposite edges in~$1234$. 

Second, to find some information about the \emph{norms} of $d_{12}$ and~$d_{34}$ --- namely, the ratio $\langle d_{12},d_{12} \rangle_{1234} \,/\, \langle d_{34},d_{34} \rangle_{1234}$ --- we use the linear relations between the $t$-components of edge operators in \eqref{pha}--\eqref{phc}, which are, according to Lemma~\ref{l:easy}, as follows:
\begin{equation}\label{5u}
\left. \begin{array}{rcl}
d_{12}|_t+d_{13}|_t+d_{14}|_t & = & 0,\\
-d_{12}|_t+d_{23}|_t+d_{24}|_t & = & 0,\\
-d_{13}|_t-d_{23}|_t+d_{34}|_t & = & 0,\\
-d_{14}|_t-d_{24}|_t-d_{34}|_t & = & 0,\\
\nu_{12}d_{12}|_t+\nu_{13}d_{13}|_t+\nu_{14}d_{14}|_t +\nu_{23}d_{23}|_t+\nu_{24}d_{24}|_t+\nu_{34}d_{34}|_t & = & 0.
\end{array} \right\}
\end{equation}
Here (and below), of course, $t=1234$, and $\nu$ is a preimage of~$\omega$, see~\eqref{odn}.

From the system above, we deduce that
\begin{equation}\label{dd}
d_{13}|_t=-\frac{\omega_{124}d_{12}|_t+\omega_{234}d_{34}|_t}{\omega_{134}-\omega_{234}}\,,\qquad
d_{24}|_t=-\frac{\omega_{123}d_{12}|_t+\omega_{134}d_{34}|_t}{\omega_{134}-\omega_{234}}\,.
\end{equation}
Now the condition $\langle d_{13},d_{24} \rangle_t = 0$ (similar to~\eqref{d}) leads, together with \eqref{d} and~\eqref{dd}, to the following relation:
\begin{equation}\label{nd}
\omega_{123}\omega_{124}\langle d_{12},d_{12} \rangle_{1234} + \omega_{134}\omega_{234}\langle d_{34},d_{34} \rangle_{1234} = 0,
\end{equation}
and it follows directly from \eqref{nd}, \eqref{d} and~\eqref{1:alpha} that the operator~\eqref{pha} is isotropic.

The proportionality between operators \eqref{pha} and~\eqref{phb} follows, of course, from the explicit expressions~\eqref{dd}. Similarly, we can prove that the operator~\eqref{phc} is proportional to both of them.
\end{proof}

We are constructing the isotropic space~$V$ of operators annihilating a Gaussian exponent \eqref{gg},~\eqref{Phi}. There exists a (nonzero) superisotropic operator $f\in V$ whose all components are proportional to differentiations (because matrix~$F$~\eqref{F} is of rank at most~4, so there exists a nontrivial but vanishing linear combination of its rows. Our operator~$f$ is then the linear combination of \emph{differentiations} with the same coefficients):
\[
f|_t \propto \partial_t .
\]
We identify, by definition, this operator with our operator~$f$ defined by \eqref{1:f} and~\eqref{1:alpha}.

Recall that we were using one fixed choice of square root signs. If we change some of these signs, some of the $t$-components of the new operator will no longer be proportional to their old versions. As such a component is still isotropic, it must be proportional to the operator~$x_t$.

This way we get superisotropic operators proportional to the entries of column~\eqref{pFx}. Namely, we change \emph{two} properly chosen square root signs and obtain such~$f^{(t)}$ that
\[
f^{(t)}|_t \propto \partial_t \quad \text{for one }t \text{ and}\quad f^{(t)}|_{t'} \propto x_{t'} \quad \text{for the four }t'\ne t .
\]
For instance, if $t=2345$, then the roots whose signs are to be changed can be chosen in any of the following three ways: as $\sqrt{\omega_{123}}$ and~$\sqrt{\omega_{145}}$, or as $\sqrt{\omega_{124}}$ and~$\sqrt{\omega_{135}}$, or as $\sqrt{\omega_{125}}$ and~$\sqrt{\omega_{134}}$. Each of these ways leads to the following:
\begin{itemize}\itemsep 0pt
 \item $\alpha_{12}$, $\alpha_{13}$, $\alpha_{14}$ and~$\alpha_{15}$ change their signs,
 \item the rest of~$\alpha_{ij}$ do not change.
\end{itemize}
For the other four tetrahedra~$t$, square roots are chosen in obvious analogy with the above.

\subsection{The ratio of $t$-components in two superisotropic operators, and double ratios of matrix~$F$ entries}\label{ss:compare}

We will obtain what can be called \emph{double ratios} of matrix entries in~\eqref{F}. These entries must be nonvanishing and lie in two rows and two columns, and our double ratio is by definition the product of two of them lying on one diagonal divided by the product of two lying on the other diagonal, see Example~\ref{xmp:dr} below. Such double ratios are invariant under gauge transformations~\eqref{renorm}, and can be easily seen to determine all the entries in~\eqref{F} to within these transformations, so they are almost all we can obtain from a cocycle~$\omega$ according to Theorem~\ref{th:2c}, item~\ref{i:w}. Here `almost' means `to within an arbitrariness of discrete character' which is related to possible interchanges~\eqref{interc}: an \emph{even} number of such interchanges may lead from one Gaussian exponent to another.

\begin{remark}
What happens with matrix~$F$~\eqref{F} under a transformation~\eqref{renorm} is of course
\begin{equation}\label{AFA}
F\mapsto AFA,\qquad A=\diag(\{\lambda_t\}_{t\subset\{12345\}}).
\end{equation}
\end{remark}

As our operators~$f^{(t)}$ are proportional to the entries of column~\eqref{pFx}, the double ratios of entries in~\eqref{F} are equal to the similar double ratios of components of~$f^{(t)}$ (which makes sense as the components are proportional).

\begin{xmp}\label{xmp:dr}
In the equality~\eqref{ffff}, the l.h.s.\ is a double ratio for the first two rows and last two columns of matrix~$F$, while the r.h.s.\ is the corresponding double ratio of components of~$f^{(t)}$.
\begin{equation}\label{ffff}
\frac{\varphi_{235}\varphi_{134}}{\varphi_{135}\varphi_{234}} = \frac{ f^{(2345)}|_{1235}\, f^{(1345)}|_{1234} }{ f^{(1345)}|_{1235}\, f^{(2345)}|_{1234} }.
\end{equation}
\end{xmp}

We now explain how to calculate the double ratios such as one in the r.h.s.\ of~\eqref{ffff} in terms of~$\omega$. It is the product of two ratios: $f^{(2345)}|_{1235}/f^{(1345)}|_{1235}$ and~$f^{(1345)}|_{1234}/f^{(2345)}|_{1234}$. As they are similar, we consider below only the second of them, which we denote~$\varkappa$, that is,
\begin{equation}\label{tf}
\tilde f|_{1234}=0
\end{equation}
for
\begin{equation}\label{tf1}
\tilde f = f^{(1345)}-\varkappa f^{(2345)}.
\end{equation}

\begin{remark}
Recall that elegant formulas in terms of elliptic functions have been promised to appear in Subsection~\ref{ss:e}. They will embrace all entries of the matrix~$F$~\eqref{F}. Right now we see that all such quantities as~\eqref{ffff} can be calculated if we can calculate~$\varkappa$ and a few similar quantities. So, below in this Subsection we just explain how to calculate~$\varkappa$, and write out the explicit expression \eqref{vk1},~\eqref{vk2} for it. A comparison of this explicit expression with formula~\eqref{e} below shows that introducing elliptic functions indeed makes sense.
\end{remark}

To calculate~$\varkappa$, we note that the coefficients~$\tilde \alpha_{ij}$ in the decomposition~\eqref{1:f} of~$\tilde f$:
\begin{equation*}
\tilde f = \sum_{1\le i<j\le 5} \tilde \alpha_{ij} d_{ij},
\end{equation*}
determine, due to~\eqref{tf}, a linear relation that must follow from the relations indicated in Lemma~\ref{l:easy}. That is,
\[
f|_{1234}=0\;\; \Leftrightarrow \;\;\rank \begin{pmatrix}
\tilde \alpha_{12}&\tilde \alpha_{13}&\tilde \alpha_{14}&\tilde \alpha_{23}&\tilde \alpha_{24}&\tilde \alpha_{34} \\
1&1&1&0&0&0 \\
-1&0&0&1&1&0 \\
0&-1&0&-1&0&1 \\
0&0&-1&0&-1&-1 \\
\nu_{12}& \nu_{13}& \nu_{14}& \nu_{23}& \nu_{24}& \nu_{34} \\
\end{pmatrix} = 4 \,
\]
(compare with the system~\eqref{5u}). So, we calculate the~$\tilde \alpha_{ij}$'s according to~\eqref{tf1}, that is, $\tilde \alpha_{ij}=\alpha_{ij}^{(1345)}-\varkappa \alpha_{ij}^{(2345)}$, where we substitute the expressions~\eqref{1:alpha} with the square roots signs changed relevantly (see the end of Subsection~\ref{ss:s-iso}), and obtain the following answer:
\begin{equation}\label{vk1}
\varkappa=\frac{\lambda_+}{\lambda_-}
\end{equation}
where
\begin{multline}\label{vk2}
\lambda_{\pm} = \omega_{124}\sqrt{\omega_{125}} \sqrt{\omega_{345}} - \omega_{123}\sqrt{\omega_{125}} \sqrt{\omega_{345}} \pm \bigl( -\sqrt{\omega_{123}}\sqrt{\omega_{135}} \sqrt{\omega_{234}}\sqrt{\omega_{245}} \\
 + \sqrt{\omega_{124}} \sqrt{\omega_{134}}\sqrt{\omega_{135}}\sqrt{\omega_{245}} + \sqrt{\omega_{124}}\sqrt{\omega_{145}}\sqrt{\omega_{234}} \sqrt{\omega_{235}} \\
 - \sqrt{\omega_{123}}\sqrt{\omega_{134}} \sqrt{\omega_{145}}\sqrt{\omega_{235}}\, \bigr). 
\end{multline}

\subsection{Explicit formulas in elliptic functions}\label{ss:e}

Elegant formulas appear if we parameterize our 2-cocycle~$\omega$ using Jacobi elliptic functions. We remind that we are still studying objects associated with a single 4-simplex~$12345$. We will turn to the whole Pachner move 3--3 below in Section~\ref{s:33}, and in that case we cannot yet present such simple and explicit formulas. This Subsection is, however, written in the hope that its ideas will be eventually of use for studying more global objects (than a single 4-simplex) as well. Apparently, more general functions of algebraic-geometrical origin will be needed.

\begin{lemma}\label{l:1:c}
Let there be a triangulated manifold~$M$ with a complex number~$x_i$ attached to each vertex~$i$. Let\/ $\sn$ be the Jacobi elliptic sine with a fixed modulus~$\kappa$:
\[
\sn x = \sn(x,\kappa).
\]
Then the cochain
\begin{equation}\label{1:c}
\omega_{ijk}= \sn(x_i-x_j)\sn(x_i-x_k)\sn(x_j-x_k),
\end{equation}
is exact:
\[
\omega = \delta \nu
\]
for some 1-cochain~$\nu$.
\end{lemma}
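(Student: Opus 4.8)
The plan is to exhibit an explicit primitive~$\nu$. Since the assertion must hold for an arbitrary triangulated manifold~$M$, and since the coboundary $(\delta\nu)_{ijk}=\nu_{jk}-\nu_{ik}+\nu_{ij}$ of a $1$-cochain, evaluated on a $2$-face, depends only on the three vertices of that face, it suffices to find $\nu$ given by a \emph{local} formula in the vertex data for which $\delta\nu=\omega$ holds triangle by triangle. I would look for $\nu$ depending only on the difference of the two endpoint values, $\nu_{ij}=h(x_i-x_j)$ with $h$ an odd function of one variable (oddness makes $\nu$ a genuine alternating cochain, matching the total antisymmetry of~$\omega$, which is visible because $\sn$ is odd). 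Writing $p=x_i-x_j$ and $q=x_j-x_k$, so that $x_i-x_k=p+q$, exactness then reduces to the single functional equation
\[
h(p)+h(q)-h(p+q)=\sn p\,\sn q\,\sn(p+q),
\]
whose right-hand side is exactly $\omega_{ijk}$.

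To discover~$h$, I would differentiate this equation in~$q$ and set $q=0$, using $\sn 0=0$ and $\cn 0=\dn 0=1$; this gives $h'(p)=h'(0)-\sn^2 p$. Integrating, and using $\dn^2=1-\kappa^2\sn^2$, one finds
\[
h(p)=\const\cdot p+\frac{1}{\kappa^2}\,\mathcal E(p),\qquad \mathcal E(u)=\int_0^u\dn^2(t,\kappa)\,\mathrm dt,
\]
where $\mathcal E$ is the Jacobi epsilon function. The linear term $\const\cdot p$ is itself the coboundary of the $0$-cochain $i\mapsto\const\cdot x_i$, hence contributes nothing to $\delta\nu$ and may be dropped; and $\mathcal E$ is odd because $\dn$ is even, so $\nu_{ij}=\kappa^{-2}\mathcal E(x_i-x_j)$ is a legitimate $1$-cochain. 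This pins down the candidate~$\nu$.

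It then remains to verify that this~$\nu$ actually works, i.e.\ that the full functional equation (not merely its $q$-derivative at $q=0$) holds. This is precisely the classical addition theorem for the Jacobi zeta/epsilon function,
\[
\mathcal E(p)+\mathcal E(q)-\mathcal E(p+q)=\kappa^2\,\sn p\,\sn q\,\sn(p+q),
\]
after which I would substitute to obtain $(\delta\nu)_{ijk}=\kappa^{-2}\bigl(\mathcal E(p)+\mathcal E(q)-\mathcal E(p+q)\bigr)=\sn(x_i-x_j)\sn(x_j-x_k)\sn(x_i-x_k)=\omega_{ijk}$. The only genuine content of the argument is recognizing this addition theorem; the cohomological reduction and the reduction to a one-variable edge function are routine bookkeeping. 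Thus the main obstacle is identifying the correct transcendental primitive $\kappa^{-2}\mathcal E$, after which the verification amounts to quoting a standard elliptic identity (itself provable, if one insists, by grinding the usual Jacobi addition formulas for $\sn,\cn,\dn$, though there is no need to do so here).
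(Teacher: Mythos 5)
Your proof is correct, but it chooses a different primitive than the paper does, and the comparison is instructive. The paper simply exhibits $\nu_{ij}=\kappa^{-2}\,\sn(x_i-x_j)/(\sn x_i\,\sn x_j)$ and checks $\delta\nu=\omega$, which amounts to the three-term identity
\[
\frac{\sn(a-b)}{\sn a\,\sn b}+\frac{\sn(b-c)}{\sn b\,\sn c}+\frac{\sn(c-a)}{\sn c\,\sn a}=-\kappa^2\,\sn(a-b)\,\sn(b-c)\,\sn(c-a);
\]
you instead take $\nu_{ij}=\kappa^{-2}\,\mathcal E(x_i-x_j)$ with $\mathcal E$ the Jacobi epsilon function, reducing everything to the addition theorem $\mathcal E(p)+\mathcal E(q)-\mathcal E(p+q)=\kappa^2\,\sn p\,\sn q\,\sn(p+q)$. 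Your computations check out: the derivative-at-$q=0$ step correctly yields $h'(p)=h'(0)-\sn^2p$, the passage from $\sn^2$ to $\dn^2$ via $\dn^2=1-\kappa^2\sn^2$ only shifts $h$ by a linear term that is itself a coboundary, and you rightly flag that the heuristic derivation must be completed by verifying the full functional equation, which the addition theorem does. What each route buys: your primitive is translation-invariant (a function of $x_i-x_j$ only) and is produced systematically rather than guessed, at the price of leaving the field generated by the $\sn(x_i-x_j)$ and introducing an elliptic integral of the second kind; the paper's primitive stays an elliptic function built from the same $\sn$'s appearing in its other formulas (relevant since $\nu$ reappears as a coefficient vector, e.g.\ in~\eqref{5u} and~\eqref{vk2}), but depends on the individual $x_i$ and acquires spurious poles where some $\sn x_i$ vanishes. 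The two primitives differ by a $1$-cocycle, so either establishes exactness of~\eqref{1:c}.
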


\begin{proof}
Define
\[
\nu_{ij}=\frac{1}{\kappa^2} \frac{\sn(x_i-x_j)}{\sn x_i \sn x_j} .
\]
A small exercise in elliptic functions shows then that, indeed,
\[
\omega_{ijk} = \nu_{jk}-\nu_{ik}+\nu_{ij}.
\]
\end{proof}

\begin{theorem}\label{th:1:c}
A generic 2-cocycle~$\omega$ on our 4-simplex~$12345$ can be represented, up to a common factor, using the elliptic parameterization~\eqref{1:c}.
\end{theorem}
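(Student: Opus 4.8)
The plan is to reduce the statement to a dimension count plus a dominance (generic-surjectivity) argument. First I would pin down the dimension of the space $Z^2$ of $2$-cocycles on $12345$. The $4$-simplex is acyclic, so its simplicial cochain complex has $H^0=\mathbb C$ and all higher cohomology zero; with $\dim C^0=5$, $\dim C^1=10$, $\dim C^2=10$, $\dim C^3=5$ and $\operatorname{rank}(\delta\colon C^0\to C^1)=4$, this forces $\dim Z^2=\dim B^2=6$. Hence the space of cocycles taken up to a common factor, $\mathbb P(Z^2)$, has dimension $5$ — in agreement with the ``five independent ratios'' of the $\omega_{ijk}$ appearing in Theorem~\ref{th:2c}\,\ref{i:w}.

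Next, by Lemma~\ref{l:1:c} the assignment~\eqref{1:c} lands in $Z^2$ for \emph{every} choice of the complex numbers $x_1,\dots,x_5$ and the modulus~$\kappa$, so no closedness needs to be checked separately. This defines a map
\[
\Psi\colon (x_1,\dots,x_5,\kappa)\longmapsto \bigl(\omega_{ijk}\bigr)_{1\le i<j<k\le 5}\in Z^2 .
\]
Because each $\omega_{ijk}$ depends on the vertex values only through the differences $x_i-x_j$, the map $\Psi$ is invariant under the simultaneous translation $x_i\mapsto x_i+c$; quotienting out this one redundant direction leaves $5+1-1=5$ essential parameters, say $(x_1-x_5,\dots,x_4-x_5,\kappa)$. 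Composing with the projection $Z^2\setminus\{0\}\to\mathbb P(Z^2)$ gives a rational map $\overline\Psi$ from a $5$-dimensional parameter space to the $5$-dimensional target $\mathbb P(Z^2)$; the ``common factor'' of the Theorem is exactly the projectivization.

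To finish I would show $\overline\Psi$ is dominant, so that its image contains a Zariski open set and a generic cocycle is therefore represented by~\eqref{1:c} up to a factor. Since source and target both have dimension~$5$, it suffices (in characteristic zero) to verify that the differential of $\overline\Psi$ has rank~$5$ at a single point, i.e.\ that the $5\times 5$ Jacobian of five independent ratios of the $\omega_{ijk}$ with respect to the five essential parameters is nonsingular there; by lower semicontinuity of the rank this then holds generically. This determinant computation is the step I expect to be the main obstacle: it requires the derivative rules $\frac{d}{dx}\sn x=\cn x\,\dn x$ together with the companion formulas for $\cn$ and~$\dn$, and one must exhibit a convenient specialization at which the determinant visibly does not vanish — for example a generic choice of the differences $x_i-x_j$ at a generic modulus, taking care that the $\kappa$-direction survives (near $\kappa=0$ one should expand in $\kappa^2$, since $\sn(x,\kappa)=\sin x+O(\kappa^2)$ and the first-order term in $\kappa$ vanishes). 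Once one such point is produced, the generic rank equals~$5$, dominance follows, and the theorem is proved.
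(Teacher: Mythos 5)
Your proposal is correct and follows essentially the same route as the paper: Lemma~\ref{l:1:c} gives closedness, the six-dimensional space of 2-cocycles projectivizes to dimension five matching the five essential parameters ($\kappa$ and four differences of the $x_i$), and genericity follows from the Jacobian of the map~\eqref{1:c} having full rank at some point. The paper likewise leaves that rank computation as an unexhibited ``calculation,'' so your more explicit caution about where to verify nonvanishing (e.g.\ the behaviour near $\kappa=0$) only adds detail to the same argument.
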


\begin{proof}
As Lemma~\ref{l:1:c} affirms that \eqref{1:c} is indeed a cocycle, it remains to note the following. The linear space of 2-cocycles on a 4-simplex is six-dimensional; taken up to a factor, they form a five-dimensional \emph{projective} space. On the other hand, the modulus~$\kappa$ and four differences between the~$x_i$'s make together also a five-dimensional parameter space, and a calculation shows that the Jacobian matrix of the mapping from the latter space to the former (made, of course, according to~\eqref{1:c}) has, generically, the full rank.
\end{proof}

\begin{remark}
Explicit formulas for $\kappa$ and differences between the~$x_i$'s in terms of~$\omega$ can also be obtained. We do not write them out here. Such formulas deserve a separate study aimed at revealing their algebraic-geometrical nature and, as we have already said, in a wider setting than just one 4-simplex.
\end{remark}

An exercise in elliptic functions shows that, given~\eqref{1:c}, the quantity~$\varkappa$, introduced in Subsection~\ref{ss:compare}, is
\begin{multline*}
\varkappa = \frac{f^{(1345)}|_{1234}}{f^{(2345)}|_{1234}} = - \frac{\sn}{\cn\dn}\left(\frac{x_1-x_3}{2}\right) \frac{\cn\dn}{\sn}\left(\frac{x_2-x_3}{2}\right)\\
 \cdot\frac{\sn}{\cn\dn}\left(\frac{x_1-x_4}{2}\right) \frac{\cn\dn}{\sn}\left(\frac{x_2-x_4}{2}\right).
\end{multline*}
Calculating also all other ratios like~$\varkappa$, we come to the following theorem.

\begin{theorem}\label{th:F-ell}
For the cocycle~$\omega$ given by~\eqref{1:c}, a corresponding 4-simplex Grassmann weight is given by \eqref{gg}, \eqref{Phi}, with the following quantities~$\varphi_{ijk}$, thought of as entries~$F_{ij}$ of matrix~\eqref{F}:
\begin{equation}\label{e}
F_{ij}=\frac{\sn}{\cn\dn}\left(\frac{x_i-x_j}{2}\right).
\end{equation}
\qed
\end{theorem}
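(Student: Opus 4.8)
The plan is to reduce everything to the double ratios of Subsection~\ref{ss:compare}. By Theorem~\ref{th:2c}, item~\ref{i:w}, a cocycle~$\omega$ fixes the weight matrix~$F$ only up to the gauge~\eqref{renorm}, i.e.\ $F\mapsto AFA$, together with an even number of interchanges~\eqref{interc}; the complete set of gauge invariants is the collection of double ratios $F_{ac}F_{bd}/(F_{ad}F_{bc})$. The candidate matrix~\eqref{e} is genuinely skew-symmetric, since $g(y):=\frac{\sn}{\cn\dn}(y/2)$ is an odd function and hence $g(x_i-x_j)=-g(x_j-x_i)$; so it will automatically be a legitimate gauge representative of the weight belonging to~$\omega$ as soon as its double ratios agree with those produced by the elliptic cocycle~\eqref{1:c}. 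This is the only thing I need to check.

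On the side of the matrix~\eqref{e} this is immediate: using the indexing of~\eqref{F}, in which the entry carrying $\varphi_{ijk}$ sits in the row and column labelled by the two vertices $l,m$ complementary to~$ijk$, the representative double ratio of Example~\ref{xmp:dr} becomes $\varphi_{235}\varphi_{134}/(\varphi_{135}\varphi_{234}) = g(x_1-x_4)\,g(x_2-x_5)/\bigl(g(x_2-x_4)\,g(x_1-x_5)\bigr)$, with all signs cancelling. On the side of~$\omega$, the same quantity is, by~\eqref{ffff}, the product of the two ratios $f^{(2345)}|_{1235}/f^{(1345)}|_{1235}$ and $\varkappa=f^{(1345)}|_{1234}/f^{(2345)}|_{1234}$, in which the unknown normalizations of the canonically normalized operators~$f^{(t)}$ cancel.

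Hence the computational heart is to evaluate~$\varkappa$ and its siblings from~$\omega$ alone, via the explicit formula~\eqref{vk1},~\eqref{vk2}. Substituting the elliptic cocycle~\eqref{1:c} and simplifying, $\varkappa$ collapses to the four-factor product $-\frac{\sn}{\cn\dn}(\frac{x_1-x_3}{2})\frac{\cn\dn}{\sn}(\frac{x_2-x_3}{2})\frac{\sn}{\cn\dn}(\frac{x_1-x_4}{2})\frac{\cn\dn}{\sn}(\frac{x_2-x_4}{2})$ displayed just before the theorem; forming the double ratio from~$\varkappa$ and its companion, the half-argument factors at vertices~$3$ and~$4$ drop out and one is left with exactly the expression found above for~\eqref{e}. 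Since both~\eqref{1:c} and~\eqref{e} are manifestly equivariant under the symmetric group~$S_5$ permuting the vertices $1,\dots,5$, verifying this single double ratio settles all five independent gauge invariants at once, which proves the theorem.

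The one genuinely nontrivial step is the elliptic simplification above: one has to reduce the symmetric combination of products of $\sqrt{\omega_s}=\sqrt{\sn(\cdot)\sn(\cdot)\sn(\cdot)}$ appearing in~\eqref{vk2} to the clean product of half-argument functions, which is where the Jacobi addition theorems and the duplication/half-argument identities expressing $\sn,\cn,\dn$ of $x_i-x_j$ through $g(\frac{x_i-x_j}{2})$ are indispensable, and where the arbitrary choice of square-root signs must be tracked and absorbed into the interchange freedom~\eqref{interc}. As an independent check of these signs I would rewrite the norm relation~\eqref{nd} obtained in the proof of Theorem~\ref{th:1:iso} using~\eqref{1:c} and~\eqref{e}; its validity confirms that the chosen branch of the square roots is the correct one.
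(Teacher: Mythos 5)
Your proposal is correct and follows essentially the same route as the paper: the paper likewise leaves the theorem to the computation of $\varkappa$ from \eqref{vk1}, \eqref{vk2} under the substitution \eqref{1:c}, the analogous ratios, and the resulting double ratios of $F$, which pin down the weight up to the gauge freedom \eqref{renorm}, \eqref{interc}. Your additional observations --- that $S_5$-equivariance reduces the verification to a single double ratio, and that \eqref{nd} can be used to cross-check the square-root branches --- are sensible refinements of the same argument rather than a different approach.
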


Recall that other weights corresponding to the same cocycle~$\omega$ can be obtained using the continuous transformations~\eqref{AFA}, corresponding to~\eqref{renorm}, and discrete transformations corresponding to~\eqref{interc} (explicit formulas for the latter transformations can also be written out, but we do not give them here).

\section{Edge operators and the move 3--3}\label{s:33}

\subsection{Pachner move 3--3}\label{ss:m33}

Here we describe a move 3--3 and fix notations for the involved vertices and simplices. Let there be a cluster of three 4-simplices 12345, 12346 and~12356 situated around the 2-face~123. Pachner move 3--3 transforms it into the cluster of three other 4-simplices, 12456, 13456 and~23456, situated around the 2-face~456. The inner 3-faces (tetrahedra) are 1234, 1235 and~1236 in the l.h.s., and 1456, 2456 and~3456 in the r.h.s. The boundary of both sides consists of nine tetrahedra; we like to arrange them in the following table, where also 4-simplices are indicated by small numbers to which the tetrahedra belong:
\begin{gather}\label{table}
\begin{array}{c|ccc}
 & \scriptstyle 12456 & \scriptstyle 13456 & \scriptstyle 23456 \\ \hline
\scriptstyle 12345 & 1245 & 1345 & 2345 \\
\scriptstyle 12346 & 1246 & 1346 & 2346 \\
\scriptstyle 12356 & 1256 & 1356 & 2356
\end{array}
\end{gather}
Thus, the tetrahedra in every row correspond to a 4-simplex in the l.h.s., and the tetrahedra in every column correspond to a 4-simplex in the r.h.s.\ of the move.

\subsection{The 3--3 relation in Grassmann algebra and the annihilating operator spaces for its sides}\label{ss:r33}

The general form of the 3--3 relation considered in this paper uses the notations for vertices and simplices adopted in Subsection~\ref{ss:m33} and is
\begin{multline}\label{33}
\iiint \mathcal W_{12345} \mathcal W_{12346} \mathcal W_{12356} \,\mathrm dx_{1234} \,\mathrm dx_{1235} \,\mathrm dx_{1236} \\
 = \const \iiint \mathcal W_{12456} \mathcal W_{13456} \mathcal W_{23456} \,\mathrm dx_{1456} \,\mathrm dx_{2456} \,\mathrm dx_{3456},
\end{multline}
where each $\mathcal W_{ijklm}$ is a Grassmann algebra element containing only those Grassmann variables~$x_t$ that belong to the 3-faces of 4-simplex~$ijklm$. More specifically, we take as~$\mathcal W_{ijklm}$ quasi-Gaussian weights defined according to Definition~\ref{dfn:W}. For a moment, we will consider the l.h.s.\ and r.h.s.\ of~\eqref{33} separately, just as expressions made of such 4-simplex weights, and putting aside their equalness.

\begin{theorem}\label{th:h33}
The l.h.s.\ and r.h.s.\ of~\eqref{33} are annihilated by nine-dimensional (i.e., maximal) isotropic spaces of operators of the form
\begin{equation}\label{dlhs}
d = \sum_{\mathrm{boundary}\;t} (\beta_t\partial_t+\gamma_tx_t),
\end{equation}
where the sum is taken over the tetrahedra in table~\eqref{table}.
\end{theorem}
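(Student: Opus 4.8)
The plan is to handle the left-hand side of~\eqref{33} in detail and then obtain the right-hand side from the evident symmetry of the move. By Theorem~\ref{th:v}, each factor $\mathcal W_{ijklm}$ spans the nullspace of a maximal isotropic subspace $V_{ijklm}\subset\mathcal V_{ijklm}$ of the ten-dimensional space of operators on its five 3-faces. Among the five 3-faces of each l.h.s.\ simplex, two are \emph{internal} (these are the tetrahedra $1234,1235,1236$, each shared by exactly two of the three simplices and integrated out) and three are \emph{boundary}. First I would observe that an operator $d$ of the form~\eqref{dlhs} annihilates the l.h.s.\ if and only if $\int d(\mathcal W_{12345}\mathcal W_{12346}\mathcal W_{12356})\,\mathrm dx_{1234}\mathrm dx_{1235}\mathrm dx_{1236}=0$, since $d$ involves only boundary variables and hence commutes, up to sign, with the Berezin integration in the internal variables.

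The central device is a \emph{completion-and-gluing} construction. Split $d=d^{(1)}+d^{(2)}+d^{(3)}$ according to which simplex each boundary tetrahedron belongs to (unambiguous by table~\eqref{table}). For each $i$ I would seek a \emph{completion} $D^{(i)}\in V_{\text{simplex }i}$ whose boundary part is $d^{(i)}$; then $d^{(i)}\mathcal W_i=-(\text{internal part of }D^{(i)})\,\mathcal W_i$ because $D^{(i)}\mathcal W_i=0$. Expanding $d(\mathcal W_{12345}\mathcal W_{12346}\mathcal W_{12356})$ by the Leibniz rule (all $\mathcal W_i$ even) and grouping the terms by internal variable, the contribution of a shared variable, say $x_{1234}$ (shared by simplices $1,2$), is governed by two facts: integration by parts, $\int\partial_{1234}(\mathcal W_{12345}\mathcal W_{12346}\mathcal W_{12356})\,\mathrm dx_{\mathrm{int}}=0$, which ties the two $\partial_{1234}$-coefficients together, and the evenness identity $(x_{1234}\mathcal W_{12345})\mathcal W_{12346}=\mathcal W_{12345}(x_{1234}\mathcal W_{12346})$, which ties the two $x_{1234}$-coefficients together. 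A short calculation then shows that this contribution vanishes precisely when the $1234$-components of $D^{(1)}$ and $D^{(2)}$ have \emph{equal} $\partial$-parts and \emph{opposite} $x$-parts (a ``reflected'' matching, stronger than mere orthogonality of the shared components). Imposing the analogous matching across each of the three shared variables yields the gluing conditions, and makes these $d$ annihilate the l.h.s.

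I would then count dimensions and check isotropy. The completions range over $V_{12345}\times V_{12346}\times V_{12356}$ (dimension $15$), subject to two scalar conditions for each of the three shared variables, i.e.\ six conditions; for weights in general position these are independent and each $V_i$ meets its internal subspace trivially, so the projection $(D^{(1)},D^{(2)},D^{(3)})\mapsto d$ is injective and its image has dimension $15-6=9$. Isotropy is where the sign flip pays off: for two such operators $d,\tilde d$ with completions $D^{(i)},\tilde D^{(i)}$ one writes $\langle d,\tilde d\rangle=\sum_i\sum_{\text{bd }t\subset i}\langle D^{(i)},\tilde D^{(i)}\rangle_t$, replaces each boundary sum by minus the internal sum using $\langle D^{(i)},\tilde D^{(i)}\rangle=0$, and observes that the two internal contributions attached to each shared tetrahedron cancel exactly because one completion has its $x$-part negated relative to the other. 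Hence $\langle d,\tilde d\rangle=0$, and a nine-dimensional isotropic subspace of the eighteen-dimensional $\bigoplus_{\text{boundary }t}\mathcal V_t$ is maximal. The right-hand side is identical: its simplices $12456,13456,23456$ share the internal tetrahedra $1456,2456,3456$ in the very same triangular pattern.

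The step I expect to be the main obstacle is pinning down the gluing condition with the correct signs --- in particular the crucial fact that the shared $x$-parts must be \emph{opposite} rather than equal --- since this requires careful tracking of the Leibniz and integration-by-parts signs and of the parity of the weights (in the ``odd'' case of Theorem~\ref{th:v} one either reduces to the even case by the interchanges~\eqref{interc} or repeats the bookkeeping with adjusted signs). This same sign is exactly what produces the cancellation in the isotropy computation, so the agreement between its two appearances is a built-in consistency check. A secondary point needing care is the genericity used to guarantee that the six gluing conditions are independent and the projection is injective, so that the isotropic space has dimension exactly nine rather than merely at most nine; sufficiency of the gluing conditions (that such $d$ do annihilate the side) holds with no genericity assumption.
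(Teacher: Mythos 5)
Your proposal is correct and follows essentially the same route as the paper: the paper's Lemma~\ref{l:lhs33} is exactly your completion-and-gluing step, with the matching condition~\eqref{agree-d} (equal $\partial$-coefficients, opposite $x$-coefficients on the shared tetrahedra) and the same use of the Leibniz rule plus the vanishing of Berezin integrals of derivatives. The only difference is that you spell out the dimension count and the isotropy cancellation explicitly, where the paper delegates the nine-dimensionality to a ``direct calculation'' (and isotropy is also immediate from the fact that the anticommutator of two operators annihilating the same nonzero element is a scalar, hence zero).
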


These nine-dimensional spaces will be the linear spans of some specific operators that we construct explicitly, for the l.h.s.\ of~\eqref{33}, in the following Lemma~\ref{l:lhs33}.

\begin{lemma}\label{l:lhs33}
Let
\begin{gather*}
d'=\sum_{t\subset 12345} (\beta'_t\partial_t+\gamma'_tx_t),\quad d''=\sum_{t\subset 12346} (\beta''_t\partial_t+\gamma''_tx_t), \\ 
\text{and} \quad d'''=\sum_{t\subset 12356} (\beta'''_t\partial_t+\gamma'''_tx_t)
\end{gather*}
be some operators annihilating the respective 4-simplex weights in the l.h.s.\ of~\eqref{33}:
\begin{equation}\label{ddd}
d'\mathcal W_{12345}=0,\quad d''\mathcal W_{12346}=0,\quad d'''\mathcal W_{12356}.
\end{equation} 
Let, moreover, $d'$, $d''$ and~$d'''$ be such that their coefficients belonging to the inner tetrahedra agree in the following way:
\begin{equation}\label{agree-d}
 \left.\begin{array}{rl}
  \beta'_{1234}=\beta''_{1234},&\quad \gamma'_{1234}=-\gamma''_{1234},\quad\\[.3ex]
  \beta'_{1235}=\beta'''_{1235},&\quad \gamma'_{1235}=-\gamma'''_{1235},\\[.3ex]
  \beta''_{1236}=\beta'''_{1236},&\quad \gamma''_{1236}=-\gamma'''_{1236}.
 \end{array}\right\}
\end{equation}
Then, the l.h.s.\ of~\eqref{33} --- denote it~$\mathcal W_{\mathrm{l.h.s.}}$ --- is annihilated by the operator~$d$ given by~\eqref{dlhs}, where the coefficients $\beta_t$ and~$\gamma_t$ are as follows:
\begin{eqnarray*}
\beta_t=\beta'_t,\quad \gamma_t=\gamma'_t && \text{for } \;t\; \text{ in the first row in~\eqref{table}},\\
\beta_t=\beta''_t,\quad \gamma_t=\gamma''_t && \text{for } \;t\; \text{ in the second row in~\eqref{table}},\\
\beta_t=\beta'''_t,\quad \gamma_t=\gamma'''_t && \text{for } \;t\; \text{ in the third row in~\eqref{table}}.
\end{eqnarray*}
\end{lemma}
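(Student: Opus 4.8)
The plan is to pull the operator~$d$ inside the Berezin integrals and reduce everything to the annihilation relations~\eqref{ddd} together with the matching conditions~\eqref{agree-d}. First I would split each of $d'$, $d''$, $d'''$ into its \emph{boundary part}, involving only the tetrahedra of table~\eqref{table}, and its \emph{inner part}, involving only the integrated-out tetrahedra $1234$, $1235$, $1236$; by the very definition of~$d$ one has $d=d'_{\mathrm{bd}}+d''_{\mathrm{bd}}+d'''_{\mathrm{bd}}$, the sum of the three boundary parts. Since each summand is a left multiplication by, or left differentiation in, a variable \emph{distinct} from the three integration variables, a short check using the Leibniz rules~\eqref{L} shows that left multiplication and left differentiation commute with the Berezin integral (the integral being the right derivative, and right and left derivatives commuting). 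Hence $d$ passes under all three integrals:
\[
d\,\mathcal W_{\mathrm{l.h.s.}} = \iiint d\,(\mathcal W_{12345}\mathcal W_{12346}\mathcal W_{12356})\,\mathrm dx_{1234}\,\mathrm dx_{1235}\,\mathrm dx_{1236}.
\]

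Next I would evaluate $d$ on the product $P=\mathcal W_{12345}\mathcal W_{12346}\mathcal W_{12356}$. Because the boundary variables of one 4-simplex do not occur in the weights of the other two, the Leibniz rule gives $dP=(d'_{\mathrm{bd}}\mathcal W_{12345})\mathcal W_{12346}\mathcal W_{12356}$ plus the two analogous terms, and the relations~\eqref{ddd} let me replace $d'_{\mathrm{bd}}\mathcal W_{12345}$ by \emph{minus} the action of the inner part of~$d'$ (since $d'\mathcal W_{12345}=0$), and likewise for $d''$, $d'''$. Collecting the outcome according to the three inner variables, the contribution of $x_{1234}$ — which comes from $d'$ acting on $\mathcal W_{12345}$ and from $d''$ acting on $\mathcal W_{12346}$, the two 4-simplices sharing $1234$ — is
\begin{multline*}
-\beta'_{1234}(\partial_{1234}\mathcal W_{12345})\,\mathcal W_{12346}\mathcal W_{12356}
 - \beta''_{1234}\,\mathcal W_{12345}(\partial_{1234}\mathcal W_{12346})\mathcal W_{12356}\\
 - (\gamma'_{1234}+\gamma''_{1234})\,x_{1234}P.
\end{multline*}

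Here the matching conditions for $1234$ do exactly what is needed: $\gamma'_{1234}=-\gamma''_{1234}$ annihilates the multiplication term, while $\beta'_{1234}=\beta''_{1234}$ fuses the two differentiation terms into the single total left derivative $-\beta_{1234}\,\partial_{1234}P$ (by the Leibniz rule, using that $x_{1234}$ is absent from $\mathcal W_{12356}$). The identical computation for $1235$ (from $d'$, $d'''$) and for $1236$ (from $d''$, $d'''$) yields $-\beta_{1235}\,\partial_{1235}P$ and $-\beta_{1236}\,\partial_{1236}P$. Thus $dP$ reduces to a sum of three total inner derivatives; since $\partial_i P$ contains no $x_i$ and the Berezin integral in $x_i$ of any element free of $x_i$ vanishes, each term integrates to zero — for the $1235$ and $1236$ terms after reordering the iterated integral so that the relevant variable is integrated first, which merely introduces an overall sign. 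Therefore $d\,\mathcal W_{\mathrm{l.h.s.}}=0$.

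The routine but genuinely delicate part will be the Grassmann sign bookkeeping: moving an odd operator past a Berezin integral, pushing a left multiplication or differentiation through the successive factors of~$P$, and reordering the triple integral all generate signs governed by the parities of the three weights. I expect this to be the only real obstacle. Once the parities are pinned down (the weights being even in the Gaussian case, and of definite parity in general), every sign above is determined, and the cancellations are precisely those encoded in the matching conditions~\eqref{agree-d}.
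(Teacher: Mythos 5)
Your proposal is correct and takes essentially the same route as the paper: the paper's proof likewise applies $d$ to the integrand, and uses \eqref{ddd}, the Leibniz rule~\eqref{L} and the definite parity of each weight to reduce $d\,\mathfrak W_{\mathrm{l.h.s.}}$ to $-(\beta'_{1234}\partial_{1234}+\beta'_{1235}\partial_{1235}+\beta''_{1236}\partial_{1236})\mathfrak W_{\mathrm{l.h.s.}}$, a sum of total inner derivatives annihilated by the Berezin integrations. The sign bookkeeping you defer is precisely what the paper's phrase ``keeping in mind the fact that each $\mathcal W_t$ is either even or odd'' covers, and it resolves as you anticipate (the parity factors square to one in the multiplication terms and match those of the total derivative in the differentiation terms).
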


\begin{proof}
First, we take the integrand $\mathfrak W_{\mathrm{l.h.s.}}=\mathcal W_{12345} \mathcal W_{12346} \mathcal W_{12356}$ and apply to it our operator~$d$~\eqref{dlhs}. Using~\eqref{ddd}, applying the Leibniz rule~\eqref{L} and keeping in mind the fact that each~$\mathcal W_t$ is either even or odd, we arrive at the equality
\[
d\mathfrak W_{\mathrm{l.h.s.}} = -(\beta'_{1234}\partial_{1234} + \beta'_{1235}\partial_{1235} + \beta''_{1236}\partial_{1236})\mathfrak W_{\mathrm{l.h.s.}}.
\]
This obviously gives the desired relation $d\mathcal W_{\mathrm{l.h.s.}}=0$.
\end{proof}

\begin{proof}[Proof of Theorem~\ref{th:h33}]
It remains to note that a direct calculation shows that the linear span of operators constructed in Lemma~\ref{l:lhs33} is indeed nine-dimensional, and that there exists the obvious analogue of Lemma~\ref{l:lhs33} for the r.h.s.\ of \eqref{33}.
\end{proof}

\begin{theorem}\label{th:i33}
The equality~\eqref{33} holds, with some value~$\const$, provided the annihilating subspaces of operators~\eqref{dlhs} for the l.h.s.\ and r.h.s.\ coincide.
\end{theorem}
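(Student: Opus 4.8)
The plan is to deduce~\eqref{33} directly from the one-dimensionality of the nullspace of a maximal isotropic subspace, Theorem~\ref{th:v}, item~\ref{i:t}, once both sides have been placed inside the nullspace of one and the same such subspace. Denote by $\mathcal W_{\mathrm{l.h.s.}}$ and $\mathcal W_{\mathrm{r.h.s.}}$ the Grassmann algebra elements obtained from the l.h.s.\ and r.h.s.\ of~\eqref{33} after the three inner Berezin integrations; these live in the algebra generated by the nine boundary variables~$x_t$ of table~\eqref{table}. The ambient space $\mathcal V$ is then the $18$-dimensional space of operators~\eqref{dlhs}, so that a maximal isotropic subspace is nine-dimensional.

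First I would invoke Theorem~\ref{th:h33} to obtain a nine-dimensional maximal isotropic subspace $V_{\mathrm{l.h.s.}}\subset\mathcal V$ annihilating $\mathcal W_{\mathrm{l.h.s.}}$, and likewise a subspace $V_{\mathrm{r.h.s.}}$ annihilating $\mathcal W_{\mathrm{r.h.s.}}$. By the hypothesis of the theorem these two subspaces coincide; call the common subspace~$V$. Next I would apply Theorem~\ref{th:v}, item~\ref{i:t}: since $V$ is maximal isotropic, the space of Grassmann elements annihilated by all of~$V$ is one-dimensional. Both $\mathcal W_{\mathrm{l.h.s.}}$ and $\mathcal W_{\mathrm{r.h.s.}}$ lie in this nullspace, hence they are scalar multiples of a single spanning element, and therefore proportional to each other:
\[
\mathcal W_{\mathrm{l.h.s.}} = \const\cdot\mathcal W_{\mathrm{r.h.s.}},
\]
which is exactly~\eqref{33}. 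Note that no new computation is required here: the substantive work of constructing annihilating operators and verifying that their span is nine-dimensional has already been done in Lemma~\ref{l:lhs33} and Theorem~\ref{th:h33}, so the present theorem is essentially their corollary.

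The one point deserving care — and the only place the argument could break down — is the \emph{nonvanishing} of the two integrated sides, which is needed both for $\const$ to be well-defined and for the hypothesis itself to be meaningful. If, say, $\mathcal W_{\mathrm{r.h.s.}}$ were the zero element, then \emph{every} operator would annihilate it and its ``annihilating subspace'' would be all of~$\mathcal V$, not a nine-dimensional isotropic subspace; thus the very assumption that both sides are annihilated by the same nine-dimensional~$V$ already forces both to be nonzero. Consequently each side spans the one-dimensional nullspace of~$V$ (generically, in line with the direct calculation underlying Theorem~\ref{th:h33}), and the proportionality constant~$\const$ is a genuine number. I therefore do not expect any serious obstacle in this theorem; the real content lies upstream, in the explicit construction of the annihilating spaces and, later, in describing when $V_{\mathrm{l.h.s.}}$ and $V_{\mathrm{r.h.s.}}$ can be made to coincide via the $2$-cocycle parameterization.
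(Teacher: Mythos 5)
Your argument is correct and is exactly the paper's own route: the paper's proof of this theorem is the single line ``This follows directly from Theorem~\ref{th:v}, item~\ref{i:t}'', i.e.\ both integrated sides lie in the one-dimensional nullspace of the common maximal isotropic subspace and are therefore proportional. Your additional remark on nonvanishing is a sensible (and harmless) elaboration that the paper leaves implicit.
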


\begin{proof}
This follows directly from Theorem~\ref{th:v}, item~\ref{i:t}.
\end{proof}

\subsection{2-cocycle on $\partial\Delta^5$ and 4-simplex weights}\label{ss:6}

Recall that $\partial\Delta^5$ --- the boundary of a 5-simplex~$\Delta^5$ --- is the union of the l.h.s.\ and r.h.s.\ of our Pachner move 3--3. Let there be a 2-cocycle~$\Omega$ on~$\Delta^5$.

Given~$\Omega$, we are going to construct 4-simplex weights satisfying~\eqref{33}. Namely:
\begin{itemize}
 \item each of the six weights~$\mathcal W_t$ corresponds to the respective restriction~$\omega=\Omega|_t$ of~$\Omega$ onto the 4-simplex~$t$ according to Subsection~\ref{ss:pw},
 \item the Grassmann variables for all~$t$ are chosen in a consistent way (details below; recall that the freedom in choosing these variables is described by formulas \eqref{renorm} and~\eqref{interc}).
\end{itemize}
Then we will see that, for a given edge~$a$, the edge operators for separate 4-simplices~$u\supset a$ --- we denote them~$d_a^{(u)}$ --- compose together well, according to the construction in Lemma~\ref{l:lhs33}. Their compositions are edge operators acting in the space~$\mathfrak V$ corresponding to the \emph{boundary} of either l.h.s.\ or r.h.s.\ of the move, consisting of the tetrahedra in table~\eqref{table} (these edge operators are defined according to the same Definition~\ref{dfn:edge}, with $\mathfrak V$ in place of~$V$). Moreover, we get thus the same edge operators for the l.h.s.\ as for the r.h.s., and it can be checked that they span a nine-dimensional linear space. So, according to Theorem~\ref{th:i33}, \eqref{33} does hold.

There are consistency conditions for the inner tetrahedra (within either l.h.s.\ or r.h.s.) and for the boundary tetrahedra (between the two sides), and they all must also be shown to be consistent between themselves.

\subsubsection*{Consistency conditions for the inner tetrahedra}

For a given inner tetrahedron~$t$ situated between 4-simplices $u_1$ and~$u_2$, we consider two partial scalar products (compare~\eqref{psc}) between the edge operators, namely
\[
\langle d_a^{(u_1)}, d_b^{(u_1)} \rangle_t\quad \text{and}\quad \langle d_a^{(u_2)}, d_b^{(u_2)} \rangle_t,\qquad a,b\subset t=u_1\cap u_2.
\]
We note that such partial scalar products are necessarily proportional:
\[
\langle d_a^{(u_1)}, d_b^{(u_1)} \rangle_t = c \langle d_a^{(u_2)}, d_b^{(u_2)} \rangle_t,\qquad c\text{ independent from }a\text{ and }b.
\]
This is proved in the same way as the proportionality~\eqref{dt} in the proof of Theorem~\ref{th:2c}, taking into account that the coefficients of linear dependencies between edge operators for edges in a tetrahedron~$t$ depend only on the restriction of our 2-cocycle onto~$t$. We can choose the common factors at the scalar products for $u_1$ and~$u_2$ in such way that $c=-1$:
\begin{equation}\label{0}
\langle d_a^{(u_1)}, d_b^{(u_1)} \rangle_t = - \langle d_a^{(u_2)}, d_b^{(u_2)} \rangle_t.
\end{equation}

Then, we choose the Grassmann variable~$x_t$ in such way that the $t$-components (recall that these have been introduced in the paragraph after formula~\eqref{abp}) of edge operators for $u_1$ and~$u_2$ have the same expressions in terms of $\partial_t$ and~$x_t$, except that the signs at~$x_t$ are opposite:
\begin{equation}\label{agree-t}
 \begin{array}{rl} \text{if} & d_a^{(u_1)}|_t = \beta_t \partial_t + \gamma_t x_t,\quad a\subset t=u_1\cap u_2,\\[.3ex]
     \text{then} & d_a^{(u_2)}|_t = \beta_t \partial_t - \gamma_t x_t.
 \end{array}
\end{equation}
This can certainly be done because of~\eqref{0} and the freedom \eqref{renorm} and~\eqref{interc}.

In order to ensure~\eqref{agree-t} for all three~$t$ in either l.h.s.\ or r.h.s., it remains to prove the following Lemma.

\begin{lemma}\label{l:ci}
The three consistency conditions~\eqref{0} (corresponding to three inner tetrahedra) for common factors at scalar products for either l.h.s.\ or r.h.s.\ are also consistent between themselves. 
\end{lemma}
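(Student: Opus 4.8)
The plan is to reduce the simultaneous solvability of the three conditions~\eqref{0} to a single polynomial identity in partial scalar products, and then to obtain that identity from the isotropy of the edge operators attached to the central 2-face~$123$.

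First I would fix, for each of the three 4-simplices $u\in\{12345,12346,12356\}$, a reference normalization of its edge operators as in Theorem~\ref{th:1b}, and write the operators to be used eventually as $c_u\,d_a^{(u)}$ with $c_u\in\mathbb C$, $c_u\ne 0$; a partial scalar product is then multiplied by~$c_u^2$. By the proportionality of partial scalar products on a shared tetrahedron established just before this lemma (proved as in Theorem~\ref{th:2c}), imposing~\eqref{0} for \emph{all} pairs $a,b\subset t$ on an inner tetrahedron $t=u_1\cap u_2$ is equivalent to the single scalar equation coming from any one pair; I take the pair of edges $12,13$ of the central face~$123$, which is contained in each of the three inner tetrahedra. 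Writing $P_t^{(u)}=\langle d_{12}^{(u)},d_{13}^{(u)}\rangle_t$, condition~\eqref{0} becomes $c_{u_1}^2\,P_t^{(u_1)}=-\,c_{u_2}^2\,P_t^{(u_2)}$. The three inner tetrahedra $1234,1235,1236$ make the three 4-simplices into a triangle, so these three conditions prescribe the ratios $c_{u_1}^2:c_{u_2}^2$ along the edges of a 3-cycle; since every nonzero complex number has a square root, they admit a common solution in the~$c_u$ exactly when the product of the prescribed ratios around the cycle is~$1$, that is, exactly when
\begin{equation*}
P_{1234}^{(12345)}\,P_{1236}^{(12346)}\,P_{1235}^{(12356)}
+ P_{1235}^{(12345)}\,P_{1234}^{(12346)}\,P_{1236}^{(12356)} = 0 .
\end{equation*}
This identity is the whole content of the lemma.

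To prove it I would exploit that in each 4-simplex~$u$ the two operators $d_{12}^{(u)}$ and $d_{13}^{(u)}$ lie in the isotropic space~$V^{(u)}$, so that $\langle d_{12}^{(u)},d_{13}^{(u)}\rangle=0$. The combinatorial point is that the tetrahedra of~$u$ containing both edges $12$ and~$13$ are exactly the two tetrahedra of~$u$ containing the face~$123$, and for each of our three 4-simplices these two tetrahedra are precisely its two inner tetrahedra. Hence each vanishing scalar product collapses to a sum of only two partial scalar products:
\begin{align*}
\langle d_{12}^{(12345)},d_{13}^{(12345)}\rangle_{1234}
 +\langle d_{12}^{(12345)},d_{13}^{(12345)}\rangle_{1235} &= 0,\\
\langle d_{12}^{(12346)},d_{13}^{(12346)}\rangle_{1234}
 +\langle d_{12}^{(12346)},d_{13}^{(12346)}\rangle_{1236} &= 0,\\
\langle d_{12}^{(12356)},d_{13}^{(12356)}\rangle_{1235}
 +\langle d_{12}^{(12356)},d_{13}^{(12356)}\rangle_{1236} &= 0,
\end{align*}
that is, $P_{1234}^{(12345)}=-P_{1235}^{(12345)}$, $P_{1234}^{(12346)}=-P_{1236}^{(12346)}$ and $P_{1235}^{(12356)}=-P_{1236}^{(12356)}$. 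After substituting these three relations, the two terms of the displayed cubic identity become opposite and cancel, as required.

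The main obstacle is not depth but bookkeeping: one must check that a single pair of central edges can legitimately be used throughout, so that the numbers $P_t^{(u)}$ appearing in the cycle condition are literally those appearing in the three isotropy relations, and that the signs combine correctly, the product of the bare $P$-ratios around the triangle being~$-1$ and the three minus signs of~\eqref{0} contributing a further~$-1$, for a trivial total. Finally, the r.h.s.\ of the move is handled by the identical argument, with the central face~$123$ replaced by~$456$, its edges $12,13$ replaced by $45,46$, and the inner tetrahedra $1234,1235,1236$ replaced by $1456,2456,3456$.
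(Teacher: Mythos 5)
Your proof is correct and is essentially the paper's own argument in more explicit form: both reduce to the single pair of edges $12,13$ of the central face, use that $\langle d_{12}^{(u)},d_{13}^{(u)}\rangle=0$ collapses to exactly two partial scalar products over the two inner tetrahedra of~$u$, and close up around the 2-face~$123$ with an even total number of sign changes (the paper phrases this as ``the sign changes six times,'' you as a product-around-the-cycle identity). The only implicit extra hypothesis in your ratio formulation is the generic nonvanishing of the partial scalar products $P_t^{(u)}$, which the paper assumes as well.
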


\begin{proof}
Consider, for instance, the l.h.s. It is enough to consider the scalar product of $d_{12}$ and~$d_{13}$. It must vanish for any 4-simplex:
\[
\langle d_{12}^{(12345)}, d_{13}^{(12345)} \rangle = \langle d_{12}^{(12346)}, d_{13}^{(12346)} \rangle = \langle d_{12}^{(12356)}, d_{13}^{(12356)} \rangle = 0 .
\]
For the 4-simplex~$12345$, it consists of two parts corresponding to tetrahedra $1234$ and~$1235$, so
\[
\langle d_{12}^{(12345)}, d_{13}^{(12345)} \rangle_{1234} = - \langle d_{12}^{(12345)}, d_{13}^{(12345)} \rangle_{1235}.
\]
Then, this scalar product changes its sign when passing to the 4-simplex~$12346$:
\[
\langle d_{12}^{(12346)}, d_{13}^{(12346)} \rangle_{1234} = - \langle d_{12}^{(12345)}, d_{13}^{(12345)} \rangle_{1234},
\]
according to~\eqref{0}. This way, the sign changes six times while walking around the 2-face~$123$, and thus the factor at the scalar product returns to its good old value.
\end{proof}

In the following Lemma~\ref{l:e}, we consider both sides of move 3--3 (or rather relation~\eqref{33}) separately, not knowing yet whether their subspaces~$\mathfrak V$ coincide.

\begin{lemma}\label{l:e}
If the edge operators for separate 4-simplices are built from a 2-cocycle~$\Omega$ on~$\partial\Delta^5$ and further chosen according to conditions~\eqref{agree-t}, then the maximal isotropic subspace~$\mathfrak V$ for either l.h.s.\ or r.h.s.\ of the move 3--3 is spanned by compositions of edge operators constructed according to Lemma~\ref{l:lhs33}: for an edge~$a$ and the three 4-simplices~$u$, take the three~$d_a^{(u)}$ as $d'$, $d''$ and~$d'''$.
\end{lemma}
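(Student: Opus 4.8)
The plan is to verify Lemma~\ref{l:e} by checking two things: first, that the compositions produced by Lemma~\ref{l:lhs33} actually lie in the annihilating space~$\mathfrak V$ and are genuine edge operators in the boundary sense of Definition~\ref{dfn:edge}; and second, that there are enough of them---nine linearly independent ones---to span the maximal isotropic space guaranteed by Theorem~\ref{th:h33}. The central mechanism is that, having arranged the agreement conditions~\eqref{agree-t} at every inner tetrahedron, the coefficients at the inner variables cancel in exactly the pattern~\eqref{agree-d} demanded by Lemma~\ref{l:lhs33}, so the lemma's hypotheses are met automatically and its conclusion applies.

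First I would fix an edge~$a$ lying in all three 4-simplices $u$ of the side under consideration (for instance $a=12$ on the l.h.s., which sits in $12345$, $12346$, $12356$). For each such $u$ the single-simplex edge operator~$d_a^{(u)}$ annihilates~$\mathcal W_u$ by construction (it lies in the isotropic space~$V$ for~$u$). I would then observe that conditions~\eqref{agree-t}, imposed at the two inner tetrahedra separating these three 4-simplices, are precisely the sign-agreement relations~\eqref{agree-d} between the $\beta$'s and $\gamma$'s at the inner variables. Hence Lemma~\ref{l:lhs33} applies verbatim with $d'=d_a^{(12345)}$, $d''=d_a^{(12346)}$, $d'''=d_a^{(12356)}$, and the composite operator~$d_a$ of the form~\eqref{dlhs} annihilates~$\mathcal W_{\mathrm{l.h.s.}}$. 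Because each $d_a^{(u)}$ already has zero coefficients at tetrahedra not containing~$a$, and the inner-variable coefficients are eliminated in forming the composite, the resulting $d_a$ involves only boundary tetrahedra containing~$a$---so it is an edge operator for~$a$ in the space~$\mathfrak V$, as required by the statement.

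Next I would count. There are nine boundary tetrahedra in table~\eqref{table} and correspondingly nine edges of the relevant kind; for edges meeting the central 2-face there is an ambiguity, but the composite construction assigns a well-defined boundary edge operator to each. I would argue that these compositions span the full nine-dimensional isotropic space~$\mathfrak V$ of Theorem~\ref{th:h33}: since each composite lies in~$\mathfrak V$, and a direct dimension count (as already invoked in the proof of Theorem~\ref{th:h33}) shows the span of the Lemma~\ref{l:lhs33} operators to be nine-dimensional, the two spaces must coincide by maximality of isotropic subspaces. The self-consistency of the three agreement conditions needed to run this argument simultaneously for all inner tetrahedra is exactly what Lemma~\ref{l:ci} supplies, so no circularity arises.

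The main obstacle I anticipate is the linear-independence and spanning step, not the annihilation step. Annihilation follows mechanically from Lemma~\ref{l:lhs33} once~\eqref{agree-t} is in place, but verifying that the composite edge operators are nine linearly independent elements---rather than collapsing to a smaller span because of the single global cocycle relation on each side---requires care. I would handle this by passing to $t$-components: at each boundary tetrahedron the $t$-component of a composite $d_a$ is just the single-simplex $t$-component (the inner cancellations do not touch boundary variables), and the genericity and independence of $t$-components used in Lemma~\ref{l:easy} then propagate to the boundary. This reduces the global spanning question to the already-established local linear-algebra facts, leaving only the bookkeeping of how the six cocycle-induced relations (one per 4-simplex) combine into the expected structure on~$\mathfrak V$.
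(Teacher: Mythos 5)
Your proposal is correct and follows essentially the same route as the paper: the key step is the observation that conditions~\eqref{agree-t} are exactly the hypotheses~\eqref{agree-d} of Lemma~\ref{l:lhs33}, so the composites annihilate the corresponding side of~\eqref{33}, and the spanning of the nine-dimensional~$\mathfrak V$ is then checked by (in the paper's words) a direct calculation. Your extra discussion of $t$-components just fleshes out that computation; the only loose point is the phrase ``nine edges of the relevant kind'' (there are fifteen edges on each side, whose composite operators together span the nine-dimensional space), but this does not affect the argument.
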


\begin{proof}
The composability of edge operators follows from comparing \eqref{agree-t} with~\eqref{agree-d}, and that the compositions give the whole nine-dimensional~$\mathfrak V$ follows from a direct calculation.
\end{proof}

\subsubsection*{Consistency conditions for the boundary tetrahedra}

Here common factors at scalar products for the two 4-simplices adjacent to the same boundary tetrahedron~$t$ --- one of these in the l.h.s.\ and one in the r.h.s., we call them $u_{\mathrm{l.h.s.}}$ and~$u_{\mathrm{r.h.s.}}$ --- must be chosen so as to be identical on~$t$:
\[
\langle d_a^{(u_{\mathrm{l.h.s.})}}, d_b^{(u_{\mathrm{l.h.s.})}} \rangle = \langle d_a^{(u_{\mathrm{r.h.s.})}}, d_b^{(u_{\mathrm{r.h.s.})}} \rangle, \qquad a,b\subset t=u_{\mathrm{l.h.s.}}\cap u_{\mathrm{r.h.s.}}.
\]
Then, the Grassmann variables~$x_t$ for boundary~$t$ must be chosen so that
\begin{equation}\label{agree-b}
d_a^{(u_{\mathrm{l.h.s.}})} = d_a^{(u_{\mathrm{r.h.s.}})}.
\end{equation}

\begin{lemma}\label{l:a}
The consistency conditions~\eqref{agree-b} for the boundary tetrahedra are also consistent between themselves and with the conditions~\eqref{agree-t} for inner tetrahedra.
\end{lemma}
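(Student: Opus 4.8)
The plan is to prove Lemma~\ref{l:a} by the same device as Lemma~\ref{l:ci}, enlarged to the cycles that run through boundary tetrahedra. As in Subsection~\ref{ss:6}, satisfying \eqref{agree-b} (and, jointly, \eqref{agree-t}) splits into two steps: first fix, for each of the six 4-simplices, a common factor at its partial scalar products so that adjacent 4-simplices agree on every shared tetrahedron --- with the prescribed sign $\sigma_t=-1$ for an inner tetrahedron (this is \eqref{0}) and $\sigma_t=+1$ for a boundary one; then, tetrahedron by tetrahedron, choose the Grassmann variable $x_t$ realizing \eqref{agree-t} or \eqref{agree-b}. The second step is routine and was already justified for inner~$t$ in the paragraph preceding Lemma~\ref{l:ci}: once the common factor is correct, an orthogonal change of $x_t$ produces the required $t$-components, and since distinct tetrahedra occupy distinct summands $\mathcal V_t$ these choices never interfere. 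So the whole burden is the first step.

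I would encode the first step on the dual adjacency graph whose six vertices are the 4-simplices and whose fifteen edges are the shared tetrahedra (three inner on each side and nine boundary). The relation at each edge determines the factor of one endpoint from the other, and the only possible obstruction is that the accumulated constraint around some closed loop be contradictory; thus it suffices that around every cycle the accumulated factor be trivial. Since $\partial\Delta^5\cong S^4$ is simply connected, the cycle space of the dual graph is generated by the elementary triangles formed by the three 4-simplices surrounding a common $2$-face $ijk$, so it is enough to check each such triangle.

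Around one triangle I would run the walk of Lemma~\ref{l:ci}: take the two edge operators $d_{ij}$ and $d_{ik}$, whose scalar product vanishes in each of the three 4-simplices and there splits into the partial products over the two tetrahedra of that 4-simplex containing $ijk$. Isotropy makes each within-simplex split an exact sign reversal, while each crossing of a shared tetrahedron multiplies the running partial product by the imposed $\sigma_t$; carrying it once around, the value is multiplied by $(-1)^3\prod_t\sigma_t$, and consistency is exactly $(-1)^3\prod_t\sigma_t=+1$, i.e.\ $\prod_t\sigma_t=-1$. This holds for every $2$-face because, among the three tetrahedra surrounding it, an odd number are inner: three for the faces $123$ and~$456$ (the triangles already treated in Lemma~\ref{l:ci}), and exactly one for each of the other eighteen faces, namely the unique tetrahedron completing $123$ or $456$. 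A mixed triangle contains both inner and boundary tetrahedra, so closing it up simultaneously verifies the mutual consistency of the boundary conditions~\eqref{agree-b} and their consistency with the inner conditions~\eqref{agree-t}.

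The one point that really has to be checked --- and the main obstacle --- is this parity: that every $2$-face of $\partial\Delta^5$ meets an odd number of inner tetrahedra, so that the imposed signs multiply to $-1$ and cancel the $(-1)^3$ coming from the three splits. Granting it, the common factors extend over the whole dual graph, the variables $x_t$ are then fixed tetrahedron by tetrahedron, and both \eqref{agree-t} and~\eqref{agree-b} hold throughout, which is the assertion of the Lemma.
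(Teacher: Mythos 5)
Your proof is correct and follows essentially the same route as the paper's: walk around each 2-face of $\partial\Delta^5$ and count sign changes of the partial scalar product --- three from the within-simplex splits plus an odd number (1 or 3) from the inner tetrahedra containing that 2-face, giving an even total. Your additional justification that the triangles around 2-faces generate the cycle space of the dual adjacency graph (which is $K_6$) makes explicit a reduction the paper leaves implicit, but the core argument is the same.
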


\begin{proof}
Take \emph{any} 2-face~$s$ in~$\partial\Delta^5$ --- the union of l.h.s.\ and r.h.s.\ of the move 3--3 --- and walk around it similarly to what we did in the proof of Lemma~\ref{l:ci}, counting how many times the partial product $\langle d_a^{(u)}, d_b^{(u)} \rangle_t$ of edge operators for two different edges $a,b\subset s$ changes its sign when we change either $t$ or~$u$. The number of times is always even: 3 when changing the~$t$'s plus 1 or 3 when changing the~$u$'s.
\end{proof}

\begin{theorem}\label{th:all}
The edge operators for separate 4-simplices built from a generic 2-cocycle~$\Omega$ on~$\partial\Delta^5$ can further be chosen according to conditions~\eqref{agree-t} and~\eqref{agree-b}, and in this case, the relation~\eqref{33} holds.
\end{theorem}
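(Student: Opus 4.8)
The plan is to assemble, in the right order, the pieces prepared in Subsections~\ref{ss:pw} and~\ref{ss:6}, and then to invoke Theorem~\ref{th:i33}; almost no new computation is required.

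First I would use the genericity hypothesis together with Theorem~\ref{th:2c} to produce, for each of the six 4-simplices $u$ of $\partial\Delta^5$, a quasi-Gaussian weight $\mathcal W_u$ whose $\mathcal W$-cocycle is the restriction $\Omega|_u$, and with it the normalized edge operators $d_a^{(u)}$. Next I would fix the gauge on the inner tetrahedra: Lemma~\ref{l:ci} guarantees that the three conditions~\eqref{0} within either side are mutually consistent, so I can normalize the partial scalar products to get $c=-1$ and then use the renormalizations~\eqref{renorm} (and, if necessary, interchanges~\eqref{interc}) to choose the inner Grassmann variables so that~\eqref{agree-t} holds. Finally I would fix the gauge on the boundary tetrahedra to achieve~\eqref{agree-b}; Lemma~\ref{l:a} guarantees that these choices are compatible both among themselves and with the inner choices already made. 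This proves the first assertion of the theorem.

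With all variables consistently chosen, I would apply Lemma~\ref{l:e}: for each side of the move, the nine-dimensional maximal isotropic subspace $\mathfrak V$ is spanned by the compositions $\hat d_a$ of edge operators built via Lemma~\ref{l:lhs33}, one per edge $a$ (the comparison of~\eqref{agree-t} with~\eqref{agree-d} is exactly what makes these compositions legitimate). The decisive observation is that the composed operators coincide for the two sides. Indeed, each boundary tetrahedron $t$ of table~\eqref{table} lies in exactly one l.h.s.\ 4-simplex $u_{\mathrm{l.h.s.}}$ and one r.h.s.\ 4-simplex $u_{\mathrm{r.h.s.}}$; by Lemma~\ref{l:lhs33} the $t$-component of the composed operator is simply $d_a^{(u_{\mathrm{l.h.s.}})}|_t$ on the left and $d_a^{(u_{\mathrm{r.h.s.}})}|_t$ on the right, and these agree by~\eqref{agree-b}. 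One checks that, for every edge $a$, the two composed operators are supported on the same boundary tetrahedra, so $\hat d_a^{\mathrm{l.h.s.}}=\hat d_a^{\mathrm{r.h.s.}}$; since all edges are shared by the two sides, the two nine-dimensional spans coincide. Theorem~\ref{th:i33} then gives~\eqref{33} with some value of $\const$.

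The only genuine difficulty is the simultaneous compatibility of all the gauge choices, and this has already been isolated in Lemmas~\ref{l:ci} and~\ref{l:a}; their essential content is the parity count obtained by walking around a 2-face of $\partial\Delta^5$, where the sign of a partial scalar product flips an even number of times (three flips from changing the inner tetrahedron, plus one or three from changing the 4-simplex). Once that bookkeeping is in place, genericity is needed only to supply the weights through Theorem~\ref{th:2c} and the nine-dimensionality in Lemma~\ref{l:e}, and the remainder is the purely formal assembly sketched above.
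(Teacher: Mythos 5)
Your proposal is correct and follows essentially the same route as the paper: the paper's own proof is just the observation that, once \eqref{agree-t} and \eqref{agree-b} are achievable (Lemmas~\ref{l:ci} and~\ref{l:a}) and Lemma~\ref{l:e} identifies $\mathfrak V$ as the span of the composed edge operators, condition~\eqref{agree-b} forces those composed operators to coincide for the two sides, so Theorem~\ref{th:i33} yields~\eqref{33}. You have merely spelled out the assembly in more detail than the paper does.
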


\begin{proof}
It remains to note that conditions~\eqref{agree-b} ensure that the edge operators in~$\mathfrak V$ --- compositions of 4-simplex edge operators --- are the same for every edge in the common boundary of both sides of the move 3--3.
\end{proof}

\subsection{The 18-parameter family}\label{ss:18p}

Our 2-cocycles~$\Omega$ on~$\partial\Delta^5$, taken up to a factor, make a 9-parametric family. Renormalizations~\eqref{renorm} for Grassmann variables on boundary 3-faces (table~\eqref{table}) supply 9 more parameters, and all these 18 parameters are independent. 

On the other hand, every quasi-Gaussian weight~$\mathcal W_{ijklm}$ in~\eqref{33} depends on 10 parameters (up to a factor. If, for instance, $\mathcal W_{ijklm}$~is Gaussian, the parameters are entries of matrix~\eqref{F}). When we compose the l.h.s.\ or r.h.s.\ of~\eqref{33} (not yet demanding that l.h.s.\ be equal to r.h.s., and not taking the value $\const$ into account),
there are thus 30 parameters. Three of them are, however, redundant, because of the possible scalings~\eqref{renorm} of variables~$x_t$ on three \emph{inner} tetrahedra. So, we have $3\times 10 - 3 = 27$ essential parameters in each side of~\eqref{33}.

The l.h.s.\ or r.h.s.\ of~\eqref{33} is determined (up to a factor) by a 9-dimensional isotropic subspace in an 18-dimensional complex Euclidean space, and the space (isotropic Grassmannian) of such subspaces is 36-dimensional. So, requiring the equalness of these subspaces for the l.h.s.\ and r.h.s., we subtract 36 parameters and are left with $2\times 27 - 36 = 18$ parameters.

\begin{remark}
We have practically repeated the reasoning in~\cite[Subsection~5.1]{KS2}, where it was called heuristic. To make it rigorous, we must check that that the 36 conditions appearing in the previous paragraph are independent, which can be done on a computer by checking that a relevant $36\times 36$ Jacobian determinant does not vanish identically.
\end{remark}

The conclusion is that a \emph{Zariski open subset of all relations~\eqref{33} comes out the way we have described in this paper}.

\begin{remark}
Particular or limit cases of our ``general position'' relations may, nevertheless, be of their own interest, and require separate investigation. For instance, it can be shown that our Lemma~\ref{l:w} no longer holds for the edge operators corresponding to the family of weights presented in~\cite[Subsection~6.3]{KS2}. Recall that intriguing exotic homological nature of (some parameters in) that family was revealed in~\cite[Section~8]{KS2}.
\end{remark}

\subsection*{Acknowledgments}

I thank the referee for the comments that helped me to improve this paper.

\end{document}